\newtheorem{lem}{Lemma}
\newtheorem{thm}[lem]{Theorem}
\newtheorem{clm}{Claim}[lem]
\newcommand{\MAF}{{\rm MAF}}
\title{Computational support for case-heavy proofs in mathematical phylogenetics\footnote{Corresponding author: steven.kelk@maastrichtuniversity.nl.}}
\author[1]{Luca Forte}
\author[2]{Leo van Iersel}
\author[1]{Steven Kelk}
\author[2]{Ruben Meuwese}
\affil[1]{Department of Advanced Computing Sciences, Maastricht University, The Netherlands.}
\affil[2]{Delft Institute of Applied Mathematics, Delft University of Technology, The Netherlands.}
\date{December 19th 2025}
\begin{document}

\maketitle

\begin{abstract}
In this note we demonstrate that a number of case-heavy combinatorial proofs in the mathematical phylogenetics literature can be proven more compactly using computational support. We use these techniques to also prove several new combinatorial lemmas that would have taken considerable effort to prove by hand. We are optimistic that similar approaches can be deployed more widely in phylogenetics. 
\end{abstract}

\section{Introduction}

Phylogenetics is the discipline of constructing evolutionary (`phylogenetic') trees and networks which summarize how
a contemporary group of species, modelled by the leaves of the tree/network,  evolved from a point of common ancestry \cite{SempleSteel2003}. The mathematical and computational challenge is that we typically only have data available for the species at the leaves, such as DNA data, not for hypothesized ancestors. There are many variants of this inference problem, depending on assumptions that are made about the input data, the underlying evolutionary model and whether trees or networks are desired as output. There are also multiple secondary challenges, such as rigorously quantifying the extent to which two competing candidate trees or networks are truly different. Many of the primary and secondary tasks associated with phylogenetic inference are themselves NP-hard, and even when efficient algorithms exist it can still be highly challenging to characterize situations when the algorithms do and do not perform well. These factors have given rise to a rich vein of research at the interface of computational biology, combinatorics and algorithm design. 

This literature has become increasingly mature, and for many lines of investigation proofs of correctness are reaching the limit of what can plausibly be written, and checked, by hand. This is visible in the number of papers that use some form of heavy case-analysis, see e.g. \cite{chen2015faster,shi2022improved,kelk2024deep,kelk2025bounding,remie,leonie}. Rather than abandoning these `almost saturated' research lines, or waiting for the emergence of some general theoretical framework which will allow these analyses to be subsumed within compact, elegant, hand-written proofs, we here propose a third route. We propose that heavy case-analyses in this field should be replaced as far as possible with \emph{computer-generated and computer-verified case analyses}. 
This note should be regarded as a proof-of-concept for this approach, which fits into the framework of proof-by-computation (of which the 4-colour theorem and the Boolean Pythagorean Triples problem are the most well-known examples \cite{appel1977part1, appel1977part2,HeuleKM16}). We have taken a popular NP-hard problem in the field, that of constructing \emph{maximum agreement forests} (see e.g. \cite{bulteau2019parameterized,ShiEtAl2018,kelk2024deep} for overviews),  where various directions of research are slowing down due to the explosion in cases in the underlying proofs. We give new semi-computational proofs for a number of established theorems, pushing the case complexity into a computational core and leaving a much smaller part to be proven by hand. We then deploy this technique to easily prove a number of new combinatorial statements, which undertaken purely by hand would have take many pages. We then discuss how the technique can be used for exploratory hypothesis generation and semi-automated refutation of false conjectures. We conclude the note by discussing some limitations of our technique and how they might be addressed by widening the use of mechanical enumeration. Due to the generic character of our framework we are optimistic that it will be more broadly applicable in phylogenetics.

We have made a prototype of the computational tools we have developed publicly available at \url{https://github.com/skelk2001/caseautomation} allowing others to posit and test hypotheses using our technique.

\section{Preliminaries}
Our notation closely follows \cite{kelk2024deep}.

Throughout this paper, $X$ denotes a non-empty finite set of \emph{taxa} (informally: species). An {\it unrooted binary phylogenetic tree} $T$ on $X$  is an 
undirected tree whose leaves are bijectively labeled with $X$ and whose other vertices all have degree 3. Note that the single vertex ($|X|=1$) and the single edge ($|X|=2$) are also unrooted binary phylogenetic trees. If $T$ and $T'$ are two unrooted binary phylogenetic trees on
$X$, and there is an isomorphism between $T$ and $T'$ which preserves $X$, then we write $T=T'$. Given that all phylogenetic trees in this note are unrooted and binary, we simply use \emph{phylogenetic tree} as shorthand. 
Two leaves, say $a$ and $b$, of $T$ are called a {\it cherry} $\{a,b\}$ of $T$ if they are adjacent to a common vertex. Moreover, for each $x\in X$, we use $p_x$ to denote the unique neighbor of $x$ in $T$ and refer to $p_x$ as the {\it parent} of $x$.

For $X'  \subseteq X$, we write $T[X']$ to denote the unique, minimal subtree of $T$ that connects all elements in $X'$. For brevity we call $T[X']$ the \emph{embedding} of  $X'$ in $T$. For an edge $e$ of $T$, we say that $T[X']$ {\it uses} $e$, if $e$ is an edge of $T[X']$. Furthermore, we refer to the phylogenetic tree on $X'$ obtained from $T[X']$ by suppressing degree-2 vertices as  the {\it restriction of $T$ to $X'$}  and we denote this by $T|X'$. 

Let $T$ be a phylogenetic tree on $X$.  We say that a subtree of $T$ is {\it pendant} if it can be detached from $T$ by deleting a single edge. For $n\geq 2$, let $K = (\ell_1,\ell_2\ldots,\ell_n)$ be a sequence of distinct taxa in $X$. 
We call $K$ an $n$-chain of $T$ if there exists a path $(p_{\ell_1},p_{\ell_2},\ldots,p_{\ell_n})$ in~$T$ or there exists a path $(p_{\ell_2},\ldots,p_{\ell_n})$ in~$T$ and $p_{\ell_1}=p_{\ell_2}$ or there exists a path $(p_{\ell_1},\ldots,p_{\ell_{n-1}})$ in~$T$ and $p_{\ell_{n-1}}=p_{\ell_n}$.

If  $p_{\ell_1} = p_{\ell_2}$ or $p_{\ell_{n-1}} = p_{\ell_n}$ holds, then $K$ is said to be {\it pendant} in $T$. To ease reading, we sometimes write $K$ to denote the set $\{\ell_1,\ell_2,\ldots,\ell_n\}$. It will always be clear from the context whether $K$ refers to the associated sequence or set of taxa. If a pendant subtree $S$ (resp. an $n$-chain $K$) exists in two phylogenetic trees $T$ and $T'$ on $X$, we say that $S$ (resp. $K$) is a {\it common} subtree (resp. chain) of $T$ and $T'$.

Let $T$ and $T'$ be two phylogenetic trees  on $X$. Furthermore, let $F = \{U_0, U_1,U_2,\ldots,U_k\}$ be a partition of $X$, where each block $U_i$ with $i\in\{0,1,2,\ldots,k\}$ is  referred to as a \emph{component} of $F$. We say that $F$ is an \emph{agreement forest} for $T$ and $T'$ if the following conditions hold. 
\begin{enumerate}
\item [(1)] For each $i\in\{0,1,2,\ldots,k\}$, we have $T|U_i = T'|U_i$. 
\item [(2)] For each pair $i,j\in\{0,1,2,\ldots,k\}$ with $i \neq j$, we have that
$T[U_i]$ and $T[U_j]$ are vertex-disjoint in $T$, and $T'[U_i]$ and $T'[U_j]$ are vertex-disjoint in $T'$. 
\end{enumerate}
\noindent
Let $F=\{U_0,U_1,U_2,\ldots,U_k\}$ be an agreement forest for $T$ and $T'$. The \emph{size} of $F$ is simply its number of components; i.e. $k+1$. Moreover, an agreement forest with the minimum number of components (over all agreement forests for $T$ and $T'$) is called a \emph{maximum agreement forest (MAF)} for $T$ and $T'$. The number of components of a maximum agreement forest for $T$ and $T'$ is denoted by $d_\MAF(T,T')$.

\section{Results}

\subsection{Warming up: long chains are preserved.}
\label{subsec:getstarted}
In this section we introduce many ideas and concepts which will be re-used in later sections.

Let $T$ and $T'$ be two phylogenetic trees on $X$. A well-known result by \cite{AllenSteel2001} proves that (i) collapsing common pendant subtrees to a single taxon does not alter $d_\MAF(T,T')$ and, (ii) common chains of length 4 or more can be truncated to length 3 without altering $d_\MAF(T,T')$. The proof of (1) is elementary. However, the proof of (2) is much more involved and relies on  the correctness of \cite[Lemma 3.3]{AllenSteel2001}. In a nutshell, this proves the following: if $K$ is a common chain and $|K| \geq 3$, then there exists some MAF (i.e. some optimal solution)  $F = \{U_0, U_1, \ldots\}$ such that $K \subseteq U_i$ for some $U_i$. In this case, we say that $F$ \emph{preserves} $K$. The proof of Lemma 3.3 is based on case-analysis.

Later in \cite{kelk2020new} it was pointed out that, although the statement of Lemma 3.3 in \cite{AllenSteel2001} is correct, it misses a number of crucial cases. To remedy this they gave a full proof for a slightly more general statement (Theorem 5), which encompassed several pages of case analysis; the extra length is primarily the result of the completed case analysis rather than the extra generality.

Here we show how to tackle the proof without any significant manual case analysis. In particular, we demonstrate an alternative proof approach based on induction, with the base case generated and checked by computer. In this specific situation the base case can also be checked by hand, but as we see in later examples the base case can quickly become rather large, and then computational support is essential.

This is the exact statement that we will (re-)prove, which in terms of generality lies somewhere between Lemma 3.3 of \cite{AllenSteel2001} and Theorem 5 of \cite{kelk2020new}.

\begin{thm}
\label{thm:3chain}
Let $T$ and $T'$ be two phylogenetic trees on $X$. Let $K$ be a common chain such that (i) $|K| \geq 3$ or (ii) $|K|=2$ and $K$ is pendant in at least one of $T, T'$. Then there exists a MAF that preserves $K$.
\end{thm}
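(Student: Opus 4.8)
The plan is to prove the statement by induction on $|K|$, with the base case handled by mechanical enumeration and the inductive step reducing a longer chain to a shorter one. The key observation driving the induction is that if $K = (\ell_1, \ldots, \ell_n)$ is a common chain with $n \geq 4$, then by the Allen--Steel chain-reduction result (part (ii) of the cited result, i.e. \cite[Lemma 3.3]{AllenSteel2001} / \cite[Theorem 5]{kelk2020new}, or rather the cleaner consequence we are re-proving), truncating $K$ to a chain of length $n-1$ does not change $d_\MAF$; more carefully, I would want to set up the induction so that a MAF preserving the truncated chain $(\ell_1, \ldots, \ell_{n-1})$ can be ``re-inflated'' to a MAF preserving the full chain $K$. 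The heart of the matter is therefore the case $|K| \in \{2, 3\}$ (with the pendant restriction when $|K| = 2$), which serves as the base case.

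For the base case I would argue that, after collapsing all common pendant subtrees (legitimate by part (i) of the Allen--Steel result, which preserves $d_\MAF$) and contracting everything outside a bounded-size neighbourhood of the chain $K$ into a constant number of taxa, only finitely many topological configurations of $(T, T')$ local to $K$ are possible. One then checks, for each such configuration, that some MAF preserves $K$ --- equivalently, that there is no configuration whose every MAF splits $K$ across two or more components. This finite check is exactly what the computational core performs: enumerate all pairs of small trees $(T, T')$ containing the relevant chain, compute $d_\MAF(T, T')$ and $d_\MAF$ of the instance with $K$ forced into one component, and verify the two agree. The subtlety is to prove rigorously that this bounded local enumeration suffices --- i.e., a ``reduction lemma'' stating that whether $K$ can be preserved depends only on the $O(1)$-sized contracted instance around $K$, so that global behaviour is captured by the finitely many small cases. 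I expect this reduction/locality argument to be the main obstacle: one must show that attaching arbitrary common structure away from $K$ cannot destroy the ability to preserve $K$, which typically requires an exchange argument showing that a MAF of the big instance induces, after restriction, a MAF of the small instance with the same preservation status, and conversely that a preserving MAF of the small instance extends.

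Concretely, the steps in order would be: (1) reduce to the case where $T, T'$ share no common pendant subtree other than possibly single taxa, invoking part (i) of Allen--Steel; (2) state and prove the locality lemma --- the preservability of $K$ is determined by the instance obtained by contracting the ``rest'' of each tree, near the endpoints of the chain, down to a bounded number of taxa; (3) for $|K| = 3$ and for $|K| = 2$ pendant, enumerate (by computer) all resulting bounded instances and verify computationally that each admits a MAF preserving $K$; (4) for $|K| \geq 4$, peel off $\ell_n$ (or $\ell_1$): apply the induction hypothesis to the chain $(\ell_1, \ldots, \ell_{n-1})$ to get a MAF $F$ preserving it, then show $F$ can be modified to also absorb $\ell_n$ into the same component without increasing the number of components, using that $\ell_n$'s parent hangs off the chain path in both trees. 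Step (4)'s re-inflation argument needs care --- the MAF component containing the chain may interact with $\ell_n$'s pendant position --- but it is a single, uniform case rather than a proliferation, so it is tractable by hand; the genuine case explosion has been quarantined inside step (3).
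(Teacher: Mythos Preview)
Your high-level decomposition --- computational verification for $|K|\in\{2,3\}$, then re-inflation for $|K| \geq 4$ --- matches the paper's, and your Step~(4) is essentially the paper's final paragraph. The genuine gap is Steps~(1)--(3), specifically the ``locality lemma''.

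You correctly flag that lemma as the main obstacle and then leave it unproven. Collapsing common pendant subtrees (Step~(1)) does \emph{not} bound $|X|$: two trees sharing a common $3$-chain can have no nontrivial common pendant subtree and still carry arbitrarily many taxa (e.g.\ two caterpillars with the non-chain taxa permuted). Your Step~(2) then asks to ``contract everything outside a bounded-size neighbourhood of $K$'' to $O(1)$ taxa, but this is not well-defined: the subtrees hanging off the chain endpoints in $T$ and in $T'$ are in general \emph{different} subsets of $X$, so contracting them does not yield two trees on a common reduced taxon set, and there is no evident restriction/extension correspondence between MAFs of the large and small instances. The ``exchange argument'' you gesture at is not a routine detail --- it \emph{is} the entire content of the $|K|\leq 3$ case, and nothing in your outline supplies it.

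The paper reaches a legitimate finite enumeration by a different route, with no locality lemma. It strengthens the statement from tree pairs $(T,T')$ to tree--forest pairs $(T,F')$ and inducts on $|X \setminus K|$ (not on $|K|$). The inductive step picks a cherry $\{p,q\}$ of $T$ disjoint from $K$ (guaranteed once $|X\setminus K|\geq 3$) and applies the standard MAF branching trichotomy --- $p$ a singleton, $q$ a singleton, or $p,q$ together in the optimal reachable forest --- each branch shedding a taxon and invoking the inductive hypothesis. The only delicate branch is when the $p$--$q$ path in $F'$ crosses the chain, handled by an explicit cut-and-merge repair that forces one of $p,q$ to be a singleton without losing optimality. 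Only after this reduction is the base case $|X\setminus K|\leq 2$ a genuinely finite computer check. Without an analogue of this cherry-branching reduction (or an actual proof of your locality lemma, which would be of independent interest), your Step~(3) is not a finite enumeration at all.
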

\begin{proof}
If $|K|=2$ then without loss of generality we assume that $K$ is pendant in $T'$.

For the purpose of building a recursive/inductive argument we actually prove a slightly different statement, Claim \ref{clm:core}; this is where most of the work has to be done.  After proving this we will show how to use it to directly prove the theorem.  The notion of a chain (and commonality) extends to forests in the expected way: $K$ is a chain of $F'$ if it is a chain of one of the trees in $F'$. Crucially, in the statement of the claim, the forest $F'$ can be 
\emph{any} disjoint union of unrooted binary phylogenetic trees whose leaves together partition $X$: it is not necessarily an agreement forest. It is also helpful to note that cuttting an edge in a tree of $F'$ has the effect of further refining the partition of $X$ it represents\footnote{After cutting an edge it is common to ``tidy-up'' the obtained forest by repeatedly suppressing any vertices of degree-2. This ensures that each tree in the forest fulfills the definition of being an unrooted, binary phylogenetic tree.}. 
\begin{clm}
\label{clm:core}
Let $T$ and $F'$ be a tree and a forest on the same set of taxa $X$, and let $K$ be a common chain with $|K|=3$ (or a common chain with $|K|=2$ that is pendant in at least one of $T$ and $F'$) that is preserved
in $F'$. If $|K|=2$, then by extending the earlier assumption we can assume without loss of generality that $K$ is pendant in $F'$. Let $k$ be the smallest number of components in any agreement forest of $T$ and $F'$ reached by cutting edges in $F'$. Then there exists an agreement forest with at most $k$ components and where $K$ is preserved, that can be reached by cutting in $F'$.
\end{clm}
\begin{proof}
\renewcommand{\qedsymbol}{\ensuremath{\clubsuit}}
By induction on $|X \setminus K|$. The base case is $|X \setminus K| \leq 2$, we will discuss proving this later. 

So, assume that $|X \setminus K| \geq 3$. Due to this assumption, $T$ contains some cherry $\{p,q\} \subset X$ that does not contain any taxa from $K$. Note that the embeddings $T[\{p,q\}]$ and $T[K]$ are disjoint in $T$.

If $\{p,q\}$ is also a cherry in $F'$, we can collapse the cherry in both $T$ and $F'$ into a single representative taxon $pq$, reducing the number of taxa, and we are done by induction. More formally: in the small $K$-preserving agreement forest on the $X' = (X \setminus \{p,q\}) \cup \{pq\}$ instance, whose existence is guaranteed by the inductive step, we can simply re-expand the $pq$ taxon back into $p,q$ to obtain the desired agreement forest for the $X$ instance, without increasing its size or influencing the preserved status of $K$. Correctness here stems from the well-known fact that collapsing a cherry in $(T,F')$ has no meaningful impact on the space of smallest agreement forests that can be reached from that point.

So $\{p,q\}$ is not a cherry in $F'$. From this we know that in \emph{any} agreement forest $F^{*}$ obtained by cutting in $F'$ at least one of the following situations must happen; these simple, self-contained observations are the foundation of many branching algorithms that have been developed to compute MAF \cite{whidden2013fixed}.
\begin{itemize}
\item[(i)] $p$ is a singleton component in $F^{*}$,
\item[(ii)] $q$ is a singleton component in $F^{*}$,
\item[(iii)] $\{p,q\}$ are both contained in some component of $F^{*}$, in which case all but perhaps one of the edges incident to the $p-q$ path in $F'$, were cut in obtaining $F^{*}$ from $F'$\footnote{Note that if $p$ and $q$ are not together in a component of $F'$, then (iii) cannot hold, so at least one of (i) or (ii) must hold.}. 
\end{itemize}
Consider some agreement forest that can be reached from $F'$ by making $k$ cuts, where $k$ is \emph{minimum} for this specific $T, F'$ combination. We now let $F^{*}$ refer specifically to this minimum reachable forest.

If (i) or (ii) holds for $F^{*}$ then we are done by induction, via the following route. Suppose without loss of generality that (i) holds. 
We delete $p$ from both $T$ and $F'$, obtaining an instance on $X \setminus \{p\}$ to which induction can be applied. It is well-known that, due to $F^{*}$ containing $p$ as a singleton component, a smallest agreement forest for the original $(T,F')$ instance on $X$ can be obtained from an arbitrary smallest agreement forest for the $X \setminus \{p\}$ instance simply by adding $\{p\}$ to the partition. In particular, this applies to the specific agreement forest returned by the inductive step. Adding the taxon $p$ back does not influence the preservation of the chain $K$, so we are done.

If (iii) holds for $F^{*}$, then let $U_{pq}$ be the component of $F^{*}$ that contains both $p$ and $q$.
If $p$ and $q$ are both on the \emph{same side} of $K$ in $F'$ -- i.e. the path from $p$ to $q$ in $F'$ does not intersect with any parents of taxa in $K$ -- we are done because (1) none of the edges incident to the $p-q$ path (some or all of which will be cut) lie on $F'[K]$, ensuring that $K$ is still preserved, and (2) after making those edge cuts there is a common cherry $\{p,q\}$ which can be collapsed into a single taxon, reducing the number of taxa and thus triggering the
induction.

Hence, the only remaining case is the subcase of (iii) when $p$ and $q$ are on \emph{opposite} sides of the chain $K$ in $F'$. (Note that if $|K|=2$ then this case cannot happen, due to $K$
being pendant in $F'$. So at this point the entire proof for $|K|=2$ is already done). We distinguish the following two subcases:
\begin{itemize}
\item[(a)] no taxa from $K$ are in $U_{pq}$. Then the 3 taxa in $K$ are necessarily singletons in $F^{*}$. This is because in $F'$ the path from $p$ to $q$ passes through the parents of the taxa in $K$, but $U_{pq}$ cannot contain any of the taxa from $K$ by assumption. In this situation we can modify $F^{*}$ as follows. First, observe that there exists $x \in \{p,q\}$ such that $F'[ U_{pq} \setminus \{x\}]$ no longer passes through the parents of the taxa in $K$. We cut off exactly one such $x$ into a singleton component. Second, observe that there might be some component $U \in F^{*}$ such that $U \cap K = \emptyset$ and in $T$ (as opposed to in $F'$) $T[U]$ passes through the parents of $K$. Such a component is called a \emph{bypass component}. We apply one cut to $U$ (note that $U$ might be equal to $U_{pq})$ so that in both $T$ and $F'$, no component passes through the parents of $K$. Finally, in $F^{*}$ we replace the three singleton components corresponding to $K$ with a single component exactly equal to $K$. The size of the agreement forest has not increased (we increased the size $F^{*}$ by at most two, and reduced the number of components by at least two), and $K$ is preserved. Moreover, this means that branch (i) or (ii) would have found this agreement forest (due to $x \in \{p,q\}$ being a singleton component), so we can leverage the correctness of those branches.

\item[(b)] $U_{pq}$ contains at least 1 taxon $c$ from $K$. Due to $\{p,q\}$ being a cherry in $T$, it then follows that $U_{pq} = \{p,q,c\}$ and that the other two taxa in $K$, let us call them $a$ and $b$, are both singletons in $F^{*}$. (Note that in $T$ no other component can intersect with, or bypass, $K$, due to $T[U_{pq}]$ passing through the parents of at least one taxon in $K$). We cut $U_{pq}$ twice to obtain singletons $p, q, c$, and then replace the $a,b,c$ singleton components with a single $K$ component. This leads to a new agreement forest that is no larger than $F^{*}$ in which $K$ is preserved (and in which both $p$ and $q$ are singletons). So, again, we can leverage branch (i) or (ii).
\end{itemize}
This concludes the inductive step.

For the \textbf{base case}, we have the following algorithmic procedure.\\

\noindent\fbox{%
\begin{minipage}{0.97\linewidth}
\begin{algorithm}[H]
\For{every tree $T$ on $X$ with $|X \setminus K| \le 2$ that has chain $K$}{
    \For{every tree $T'$ on $X$ that also has chain $K$}{
        \For{every forest $F'$ of $T'$ that preserves $K$}{
            check whether the claim holds for $T, F'$; if it does not, return FALSE. Otherwise continue.
        }
    }
}
return TRUE.
\end{algorithm}
\end{minipage}%
}
\vspace{1em}\\
There are many different ways to perform the $(T,F')$ check at the centre of the loop. In our implementation we use the Integer Linear Programming (ILP) formulation of Van Wersch et al \cite{van2022reflections}. This is a natural ILP for computing a MAF of two trees $T, T'$ in which there is a binary decision variable for each edge of $T'$. Setting apriori a variable to 0 means ``definitely do not cut this edge'' and to 1 means ``definitely cut this edge''. To model $F'$ we simply set edges to 1 that can be cut to obtain $F'$ from $T'$. This ensures that the ILP can correctly compute the minimum number of cuts to obtain an agreement forest from $(T,F')$. 
We run the ILP twice. The first time, we simply solve the ILP to compute this minimum number of cuts OPT1. Next, we add constraints setting all the decision variables corresponding to edges on $F'[K]$ to zero: this models preservation of $K$. We solve the modified ILP to obtain a new minimum OPT2. The claim (i.e. that $K$ is preserved by some MAF) holds if and only if OPT1 is the same as OPT2. The generation and solving of the base case is shown in Appendix \ref{subsec:bc3chain}.
\end{proof}
Now that Claim \ref{clm:core} has been established, taking $F' = T'$ immediately establishes the theorem for the situation when $|K| \leq 3$.
To establish the theorem for $|K|>3$, we apply the following argument standard in the literature; we repeat it here simply for completeness. We truncate $K$ to a length 3 chain $K' = (a,b,c)$. Invoking the claim for $|K|=3$ shows that there exists a MAF that preserves $K'$. The taxa in $K \setminus K'$ can now be inserted back into $T$ and $T'$  without increasing the size of the agreement forest: we simply add the taxa $K \setminus K'$ to the component that contains $\{a,b,c\}$.
Combined with the fact that the size of a MAF is non-increasing under the deletion of taxa, we obtain a MAF in which all of $K$ is preserved. 
\end{proof}

\noindent
\emph{Remark.} When solving the base case, it is safe to ignore $(T,F')$ pairs where $T$ contains a cherry (outside
$K$). That's because the inductive argument can also be applied here, reducing the number of taxa by at least one. In other words: the check for such $(T,F')$ reduces to a check on a smaller instance within the base case\footnote{Note, however, that such pruning inside the space of base case instances is only safe
if this space is defined via a \emph{correct} inductive step. When hunting for counter-examples and a correct inductive step is not known, this pruning should be avoided as it can cause counter-examples to be missed. We return to this issue in Section \ref{sec:exploration}. }. Applying a similar
logic, $(T,F')$ pairs can also be disregarded where $F'$ contains at least one singleton component.

\subsection{Interrupted 4-chains}
\label{subsec:int4}

Although the machinery introduced in the previous section seems rather heavy relative to the statement being proven, the same machinery can now be deployed with only small modifications to prove more ambitious statements.

Consider the situation depicted in Figure \ref{fig:interrupted4chain}. Here we say that $K = (a,b,c,d)$ is an \emph{interrupted 4-chain}. Informally: $(a,b,c,d)$ would be a common 4-chain of $T$ and $T'$, except for the presence of the dotted edge
in the figure (that intersects the path from $b$ to $c$ in $T$), which we call the \emph{interrupter}\footnote{For the proof that follows it is convenient to also allow $B=\emptyset$. In this case the interrupted 4-chain is simply a common 4 chain, and the interrupter edge doesn't actually exist.}. The most powerful set of data-reduction rules for the MAF problem requires a proof that if this interrupted 4-chain structure is present, then at least one MAF has the property that none of its components use the interrupter edge. The existing proof of this claim uses an extensive hand-written case analysis \cite{kelk2024deep}. We can replace this hand-written analysis with a more compact proof that makes
greater use of computation.

\begin{figure}[h]
\centerline{\includegraphics{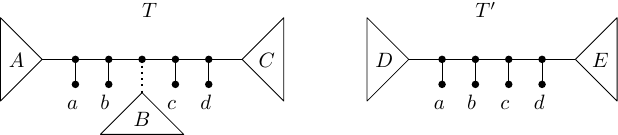}}
\caption{\label{fig:interrupted4chain}An interrupted 4-chain on taxa~$a,b,c,d$.
}
\end{figure}

\begin{thm}[Theorem 3.2, \cite{kelk2024deep}] \label{thm:interrupted4chain}
    Let $T$ and $T'$ be two phylogenetic trees on $X$ and let $K=(a,b,c,d)$ be an interrupted 4-chain of $T$ and $T'$, as shown in Figure \ref{fig:interrupted4chain}. Then there exists a maximum agreement forest $F$ for $T$ and $T'$ such that, for each $U\in F$, $T[U]$ does not use the interrupter of $K$ in $T$ (equivalently, $U$ does not contain taxa from both $B$ and $X\setminus B$).
\end{thm}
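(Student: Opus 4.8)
The plan is to follow the same template as the proof of Theorem~\ref{thm:3chain}: isolate the genuinely case-heavy part into a self-contained claim about a tree $T$ and an arbitrary forest $F'$ (not necessarily an agreement forest) sharing the interrupted 4-chain structure, prove that claim by induction on $|X \setminus K|$ with a computer-checked base case, and then recover the theorem by taking $F' = T'$. So first I would formulate the analogue of Claim~\ref{clm:core}: if $T$ and $F'$ share the interrupted 4-chain $K = (a,b,c,d)$ with interrupter set $B$, and the interrupter structure is ``respected'' in $F'$ (no component of $F'$ uses the interrupter edge), then among all agreement forests reachable from $F'$ by cutting edges, the minimum size is achieved by one that still does not use the interrupter. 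The reason to phrase it with a general forest $F'$, exactly as before, is that the inductive step needs to cut edges and recurse, and the intermediate objects produced are forests, not trees.

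Next I would carry out the inductive step. As in the warm-up, if $|X \setminus K| \geq 3$ then $T$ has a cherry $\{p,q\}$ disjoint from $K$. If $\{p,q\}$ is also a cherry in $F'$, collapse it and recurse. Otherwise, in any agreement forest $F^{*}$ reached by cutting in $F'$, one of the three standard situations holds for $\{p,q\}$: $p$ is a singleton, $q$ is a singleton, or $p,q$ lie in a common component $U_{pq}$ (which then forces almost all edges incident to the $p$--$q$ path to be cut). The singleton cases reduce to a smaller instance by deleting $p$ (or $q$) and adding $\{p\}$ back at the end; deleting a single taxon outside $K$ cannot destroy the interrupted 4-chain structure, and adding a singleton back does not make any component use the interrupter. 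The remaining case is $p,q \in U_{pq}$; here I would do a case split according to how the $p$--$q$ path in $F'$ (and the embedding $T[U_{pq}]$ in $T$) relates to the parents of $a,b,c,d$ and to the interrupter edge. When $p$ and $q$ are on the same side of everything relevant, cutting the incident edges leaves a collapsible common cherry and the structure intact. When they straddle the interrupter or some of the chain taxa, I would modify $F^{*}$ by a bounded number of extra cuts — peeling off a singleton from $\{p,q\}$ and dealing with any ``bypass'' component in $T$ exactly as in case~(a)/(b) of Claim~\ref{clm:core} — to produce a no-larger agreement forest that respects the interrupter and in which $p$ or $q$ is a singleton, so correctness is inherited from the singleton branches.

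For the base case, $|X \setminus K| \leq 2$, I would enumerate all trees $T$ on $X$ containing the interrupted 4-chain structure, all trees $T'$ on $X$ containing the same structure, and all forests $F'$ of $T'$ that respect the interrupter, and for each pair $(T, F')$ run the Van~Wersch et~al.\ ILP twice: once to compute the minimum number of cuts OPT1 to reach an agreement forest from $(T, F')$, and once with the decision variables for the interrupter-crossing edges fixed to $0$ to compute OPT2; the claim holds for that pair iff $\mathrm{OPT1} = \mathrm{OPT2}$. One subtlety here, relative to the warm-up, is that ``the interrupter'' is not a single edge of $F'$ but a bipartition $(B, X \setminus B)$, so ``not using the interrupter'' must be encoded as forbidding any component from containing taxa from both sides — in the ILP this means forbidding the configurations in which an uncut path in $F'$ joins a $B$-taxon to an $(X \setminus B)$-taxon; enumerating which edge-subsets of $F'$ must be cut to realize this is the mechanical analogue of ``set edges on $F'[K]$ to zero''. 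I would also need to enumerate over the possible sizes and attachment points of $B$ within the $|X \setminus K| \le 2$ budget, including $B = \emptyset$ (plain common 4-chain).

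Finally, the theorem follows by taking $F' = T'$ (which trivially respects the interrupter since $T'$ has the interrupted 4-chain structure), giving a MAF of $T$ and $T'$ no component of which uses the interrupter. I expect the main obstacle to be the inductive step in the straddling subcase of (iii): ensuring that the bounded local surgery on $F^{*}$ — adding a few cuts to separate $p$, $q$ and the chain taxa, handling a possible bypass component in $T$, and re-merging $\{a,b,c,d\}$ into one component — simultaneously (1) does not increase the number of components, (2) produces a forest respecting the interrupter, and (3) leaves $p$ or $q$ as a singleton so the singleton branches can be invoked. Getting the bookkeeping of ``$+$ at most two cuts, $-$ at least two components'' to balance in the presence of the extra interrupter constraint, and checking that no case is silently lost, is the delicate part; it is precisely the part that the computer-verified base case is meant to keep honest.
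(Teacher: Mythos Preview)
Your template matches the paper's—an inductive claim on $(T,F')$ pairs with a computer-checked base case—but two concrete gaps would stop it from going through as written.

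First, the base-case threshold is off. You start the inductive step at $|X\setminus K|\geq 3$, carried over from Theorem~\ref{thm:3chain}. But the interrupted 4-chain gives $T$ \emph{three} pendant regions $A,B,C$ outside $K$, not two. With $|X\setminus K|=3$ one can place one taxon in each of $A,B,C$, and then $T$ has no cherry disjoint from $K$, so your cherry-branching never fires. The paper therefore runs the inductive step only for $|X\setminus K|\geq 4$ (pigeonhole over three regions) and enlarges the computational base case to $|X\setminus K|\leq 3$.

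Second, your inductive invariant does not hold at the start. You carry ``no component of $F'$ uses the interrupter edge'', i.e.\ no block of $F'$ meets both $B$ and $X\setminus B$; but $F'=T'$ is a single tree on all of $X$, so whenever $B\neq\emptyset$ this fails immediately, and your final ``take $F'=T'$'' step collapses. The paper's fix is to carry a slightly \emph{stronger} property as the invariant, namely that $K=\{a,b,c,d\}$ is \emph{preserved} in $F'$ (all four taxa in one component). This holds for $T'$, is stable under the singleton and common-cherry reductions, and at the end implies the theorem because any component containing all of $a,b,c,d$ cannot also meet $B$ (it would have different topologies in $T$ and $T'$). A side benefit is that the ILP encoding stays exactly as in the 3-chain case—just freeze the edges on $F'[K]$—rather than your more elaborate ``forbid any uncut $B$-to-$(X\setminus B)$ path'' constraint. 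Relatedly, the surgery bookkeeping you anticipate (``$+$ at most two cuts, $-$ at least two components'') is one short on both sides: merging the four singletons $a,b,c,d$ saves three, and the paper shows that up to three compensating cuts can indeed be needed (two to split $U_{pq}$ plus one for a bypass, or one for $U_{pq}$ plus up to two for a component touching all of $A,B,C$).
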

\begin{proof}
We use a very similar induction argument to that in Theorem \ref{thm:3chain}. Rather than directly proving that some MAF does not use the interrupter edge, we will prove a very slightly different result which implies what we need. Namely: that some MAF has \emph{all the taxa $a,b,c,d$ together in one component} (possibly with other taxa). Once a component contains all four of these taxa, it cannot contain any taxa from $B$, because then the component would have a different topology in $T$ and $T'$; hence, the interrupter is not used.

The exact claim we will prove is as follows; correctness will then follow by taking $F'=T'$ in this claim. 

\begin{clm}
\label{clm:int4chain}
Let $T$ and $F'$ be a tree and a forest on the same set of taxa $X$, and let $K$ be an interrupted 4-chain whereby $K$ is preserved in $F'$ i.e. some component of $F'$
completely contains the chain $K$. Let $k$ be the smallest number of components in any agreement forest of $T$ and $F'$ reached by cutting edges in $F'$. Then there exists an agreement forest with at most $k$ components and where $K$ is preserved, that can be reached by cutting in $F'$.
\end{clm}

 First, we assume that $| X \setminus K| \geq 4$ (rather than $\geq 3$). Under this assumption then by the pigeonhole principle at least one of the sets $A, B, C$ in $T$ must contain a cherry $\{p,q\}$. Next, we observe that cases (i), (ii) go through unchanged, as does case (iii) when $p$ and $q$ are on the same side of $K$ in $F'$. Cases (iii)(a) and (iii)(b) require some minor modifications, however. We start with (iii)(b). As before, $U_{pq}$ can intersect with at most one of the taxa $x \in \{a,b,c,d\}$, and in that case $U_{pq} = \{p,q,x\}$ and all the taxa in $\{a,b,c,d\} \setminus \{x\}$ will be singletons. We start by cutting $U_{pq}$ twice to obtain the singleton components $p, q, x$. 
As before the goal is to introduce $\{a,b,c,d\}$ as a single component, which will compensate for these extra cuts, ensuring that we still have a smallest agreement forest, with the desirable property that at least one of $p,q$ is a singleton component (which lets us use branch (i) or (ii)). However, before we can introduce  $\{a,b,c,d\}$ as a single component, there is still potentially some further cutting to be done. Specifically, there might be a component $U \neq U_{pq}$ where $U \cap K = \emptyset$ such that $U$ intersects with $B$ and exactly one of the sets $A, B, C$. (Other patterns of intersection with $A, B, C$ are excluded because $T[U_{pq}]$ already passes through at least one parent of a taxon in $K$). This has the effect that in $T$, $T[U]$ enters the $K$ region via $B$ and leaves via $A$ or $C$; this could happen, for example, if $\{p,q,x\} \subseteq (A \cup \{a,b\})$. To deal with this we apply a single cut to $U$ so that its image no longer intersects with $T[K]$. Hence, in total we make at most 3 cuts. But fortunately this is compensated for by merging $a,b,c,d$ into a single component, saving 3 cuts, and thus ensuring we still have a smallest agreement forest in which $K$ is preserved.

For case (iii)(a), we start by again cutting off one of $p$ and $q$, using the same considerations as given earlier when choosing which of the two to cut. We might however have to apply 1 or 2 further cuts to `free up' the $a,b,c,d$ region in $T$. This could happen, for example, if some component (possibly $U_{pq}$ itself) has intersection with two or three of $A, B, C$ (but no intersection with $a,b,c,d$); 2 cuts will be required if $U$ intersects with all three of $A,B,C$. Fortunately these in total 3 cuts are compensated for when we merge the 4 singletons $\{a,b,c,d\}$ into a single component. So once again we have a smallest agreement forest in which at least one of $p,q$ is a singleton component, and $K$ is preserved, allowing us to leverage branch (i) or (ii).

All that remains is to check the base case, which now consists of all $(T, F')$ pairs where $|X \setminus K| \leq 3$. 
The proof of the base case is once again computational, and summarized in Appendix \ref{subsec:bcinterrupt}.
\end{proof}

\subsection{Moving on: new combinatorial lemmas}
\label{subsec:newlemmas}
In this section we demonstrate the potential of our method by proving two new (related) combinatorial statements that would take many pages of hand-written case-analysis if proven the traditional way. We say that two trees $T, T'$ have a \emph{common 2-2-2 star on} $\{a,b,c,d,e,f\}$ if they have the structure shown in in Figure \ref{fig:3star}. 

\begin{figure}[h]
\centerline{\includegraphics{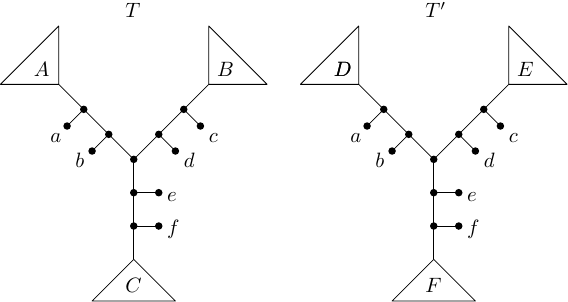}}
\caption{\label{fig:3star}Two trees with a common 2-2-2 star on $\{a,b,c,d,e,f\}$. Note that it is not required for any of the sets $A,\ldots ,F$ to be non-empty.
We will prove that there exists a MAF of $T, T'$ such that some component of the MAF contains all of $a,b,c,d,e,f$ i.e. that the common 2-2-2 star is preserved by some optimal solution.}
\end{figure}

\begin{thm}
\label{thm:32} Suppose $T$ and $T'$ have the common 2-2-2 star structure shown in Figure \ref{fig:3star}. Then there exists a MAF in which the common 2-2-2 star is preserved.
\end{thm}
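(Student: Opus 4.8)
The plan is to mimic the inductive scheme used in Theorems \ref{thm:3chain} and \ref{thm:interrupted4chain} as closely as possible. As before, the theorem follows immediately once we establish the corresponding claim: for a tree $T$ and a forest $F'$ on a common taxon set $X$ sharing the common 2-2-2 star structure on $\{a,b,c,d,e,f\}$ (with the star preserved in $F'$, i.e. some component of $F'$ contains all six of these taxa), and with $k$ the minimum number of cuts in $F'$ needed to reach an agreement forest of $T$ and $F'$, there is an agreement forest reachable by cutting in $F'$ with at most $k$ components that preserves the star. Taking $F' = T'$ then yields the theorem, since a component containing all of $a,b,c,d,e,f$ must have the same restriction-topology in $T$ and $T'$ and hence contains no taxa from any of $A,\dots,F$ (which would break that topology). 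So the whole argument reduces to the claim.

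The induction is on $|X \setminus \{a,b,c,d,e,f\}|$. I would set the base-case threshold at $|X \setminus \{a,b,c,d,e,f\}| \le 3$, so that in the inductive step, with at least $4$ extra taxa distributed among the six sets $A,\dots,F$, the pigeonhole principle guarantees a cherry $\{p,q\}$ in $T$ disjoint from $\{a,b,c,d,e,f\}$ lying entirely inside one of $A,\dots,F$. The cherry-reduction (collapse $\{p,q\}$ in both $T$ and $F'$ when it is also a cherry in $F'$), the singleton cases (i) and (ii), and case (iii) when $p,q$ lie on the same side of the star in $F'$, all go through essentially verbatim as in the earlier proofs. The interesting work is in the analogues of (iii)(a) and (iii)(b): if $U_{pq}$ (the component of the minimum reachable forest $F^{*}$ containing both $p$ and $q$) passes through the central region of the star in $F'$ but $p,q$ sit in ``different arms,'' then I cut off one of $p,q$ to break that, possibly cut $U_{pq}$ itself further, and deal with any \emph{bypass components} — components $U$ with $U \cap \{a,b,c,d,e,f\} = \emptyset$ whose image $T[U]$ nevertheless threads through the star region in $T$. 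Because the star has three arms meeting at a central vertex (with two taxa per arm), a bypass component can enter and leave through at most two of the three arms, so at most $2$ extra cuts are needed to free the star region in $T$ (and possibly one more to handle $U_{pq}$); and the possibility that a single arm's parent-pair lies on the $p$--$q$ path is what the ``cut off one of $p,q$'' step addresses, costing $1$ cut. Crucially, merging $a,b,c,d,e,f$ into one component replaces up to six singletons by one, saving up to $5$ cuts, which comfortably compensates the at most $3$ extra cuts. In every subcase the resulting agreement forest is no larger than $F^{*}$, preserves the star, and has at least one of $p,q$ as a singleton, so we can leverage the correctness of branch (i) or (ii) exactly as in Theorem \ref{thm:interrupted4chain}.

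For the base case I would run the same computational procedure: enumerate all trees $T$ on $X$ with $|X \setminus \{a,b,c,d,e,f\}| \le 3$ that contain the common 2-2-2 star, all trees $T'$ that also contain it, and all forests $F'$ of $T'$ that preserve the star, and for each $(T,F')$ use the ILP of \cite{van2022reflections} twice — once unconstrained to get $\mathrm{OPT}1$, once with the edges of $F'$ lying inside the star region fixed to $0$ to get $\mathrm{OPT}2$ — verifying $\mathrm{OPT}1 = \mathrm{OPT}2$ in every case. The same safe prunings apply: skip $(T,F')$ where $T$ has a cherry outside the star, or where $F'$ already has a singleton component. The main obstacle I anticipate is not the computation but getting the case-accounting in (iii)(a)/(iii)(b) exactly right: the 2-2-2 star has more internal structure than an interrupted 4-chain (three arms and a branch vertex rather than a single path), so I need to carefully enumerate how a component of $F^{*}$, or a bypass component in $T$, can intersect the arms $A,B,C$ versus $D,E,F$ and the taxa $\{a,\dots,f\}$, and confirm in each pattern that the number of compensating cuts saved by merging the six taxa strictly dominates the extra cuts introduced. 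Once that bookkeeping is pinned down, the rest is a direct transcription of the template already established.
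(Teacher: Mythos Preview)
Your overall scaffold matches the paper: induct on $|X\setminus K|$ with a computational base case at $|X\setminus K|\le 3$, find a cherry $\{p,q\}$ in $T$ outside $K$, and reuse the cherry/singleton/same-side branches verbatim. Two concrete issues need fixing, though.

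First, the pigeonhole step is misstated. In the 2-2-2 star picture the three arms $A,B,C$ belong to $T$ and the three arms $D,E,F$ belong to $T'$ (hence to $F'$); they are not six regions of a single tree. So the extra taxa in $T$ are distributed over \emph{three} pendant subtrees $A,B,C$, and $|X\setminus K|\ge 4$ forces one of these to contain $\ge 2$ leaves and hence a cherry. Your ``4 taxa among six sets'' does not give this; as written the inductive step would fail to produce a cherry.

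Second, your cut-and-merge accounting does not close. When $p$ and $q$ lie in different arms of $F'$ (say $p\in D$, $q\in F$), the $p$--$q$ path in $F'$ passes through the parents of exactly four star-taxa, say $K_4=\{a,b,e,f\}$; only these four are forced to be singletons in $F^{*}$ (or become so after cutting), while the remaining pair $\{c,d\}$ can sit together in a component. So ``merging $a,\dots,f$ replaces up to six singletons and saves up to $5$'' is not available; the guaranteed saving from merging $K_4$ is only $3$. The paper's count is: cut \emph{both} $p$ and $q$ (cost $\le 2$, which also singletonises any taxon of $K_4$ that was in $U_{pq}$), then at most two further cuts in $T$ to sever components entering the $K_4$ region from the $A$-side or the $C$-side, total $\le 4$. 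Merging $K_4$ recovers $3$, and then merging the new $K_4$-component with whatever component(s) contain $c,d$ recovers the fourth. Your sketch of cutting only one of $p,q$ and relying on a saving of $5$ does not balance. This is exactly the bookkeeping you flagged as the obstacle, and it is where the actual work lies; the paper's proof resolves it with the specific $4$-vs-$4$ count above rather than a loose $3$-vs-$5$.
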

 
 \begin{proof}
 For the inductive step $|X \setminus K| \geq 4$ is sufficient (where here $K=\{a,b,c,d,e,f\}$), because this ensures that at least one of $A,B,C$ has a cherry $\{p,q\}$. So the base-case is $|X \setminus K| \leq 3$; its computational proof is summarized in Appendix \ref{subsec:bc32}.

The inductive step has a similar structure to earlier arguments. We know that $p,q$ are both in exactly one of $A, B, C$, although for our analysis it does not matter which one. If $p$ and $q$ are both entirely inside $D$, $E$ or $F$ we are done.
So, as usual, the main case to consider is when $p \in P$ and $q \in Q$ where $P, Q$ are two distinct sets from $\{D,E,F\}$.

The path from $p$ to $q$ in $F'$ passes through the parents of exactly 4 of the taxa in $K$; let us denote this subset $K_4$. Note that at least three of $K_4$ are necessarily singletons (because $p,q$ is a cherry in $T$). 

We proceed as follows. Define $U_{pq}$ as usual. We cut \emph{both} $p$ and $q$ off, requiring at most 2 cuts, and causing $p$ and $q$ to become singletons. (If $U_{pq}$ also contained exactly one taxon from $K_4$, then this taxon also becomes a singleton, by virtue of the fact that the component necessarily had size exactly 3). Now, suppose without loss of generality that $K_4 = \{a,b,e,f\}$. We make at most two further cuts: one cut on the `edge' between $f$ and $C$\footnote{More formally, this is the edge of the component whose embedding passes through the edge between $f$ and $C$ in $T$.} to cut a component that has taxa in $C$ and whose embedding in $T$ intersects with the parent of $f$ in $T$; and by the same logic a second cut on the `edge' between $A$ and $a$.
Note that 2 cuts are only required if $U_{pq} \cap K = \emptyset$ and some component intersects simultaneously with $A, B$ and $C$. This has the effect of `freeing' the region occupied by $K_4$ in both $T$ and $F'$ at the total expense of 4 cuts. Now, the four singletons $a,b,e,f$ can be replaced by a single component $\{a,b,e,f\}$, saving 3 cuts. Finally, we merge the new $\{a,b,e,f\}$ component with all components that intersect $\{c,d\}$; there will be at least one. This will save a 4th cut, and we again have a smallest agreement forest in which the 2-2-2 star is preserved and at least (in fact, both) of $p,q$ are singletons, so branch (i) and (ii) apply.
\end{proof}
In fact, we can prove something stronger. A common 2-2-1 star on $\{a,b,c,d,e\}$ is defined similarly to a 2-2-2 star (see Figure \ref{fig:3star}) except that the taxon $f$ is missing. 

\begin{thm}
\label{thm:221}
Suppose $T$ and $T'$ have a common 2-2-1 star on $\{a,b,c,d,e\}$. Then there exists a MAF in which the 2-2-1 star is preserved.
\end{thm}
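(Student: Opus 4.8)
The plan is to mirror the inductive strategy used for Theorem~\ref{thm:32}, again proving a slightly stronger claim about a tree $T$ and an arbitrary forest $F'$ on $X$ that preserves the 2-2-1 star $K=\{a,b,c,d,e\}$, namely: if $k$ is the smallest number of components of an agreement forest of $T$ and $F'$ obtainable by cutting in $F'$, then there is such an agreement forest with at most $k$ components in which $K$ is still preserved. The theorem then follows by taking $F'=T'$. For the inductive step I would assume $|X\setminus K|\geq 4$, which (because $K$ touches only the three pendant-set slots $A,B,C$ among the non-$K$ taxa, in the analogue of Figure~\ref{fig:3star} with $f$ deleted) guarantees by pigeonhole that one of $A,B,C$ contains a cherry $\{p,q\}$ of $T$; the cases (i), (ii) (one of $p,q$ a singleton) and the part of (iii) where $p,q$ lie on the same side of $K$ in $F'$ go through verbatim as in the earlier proofs. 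So the base case is $|X\setminus K|\leq 3$, and it is discharged computationally by the same ILP-based enumeration (run the MAF ILP once for the minimum number of cuts, and again with the $F'[K]$-edge variables forced to zero, and check the two optima agree), to be summarized in an appendix.

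The substantive work is the main case of (iii): $p\in P$ and $q\in Q$ for two distinct sets $P,Q\in\{D,E\}$ --- but here, unlike the 2-2-2 case, there is a genuine asymmetry because the path from $p$ to $q$ in $F'$ passes through the parents of only \emph{four} of the five taxa of $K$ (it enters through one two-leaf arm and leaves through the other, skipping the singleton leaf $e$), \emph{or} through the parents of $e$ and one of the arms, depending on which pair $P,Q$ is chosen and the position of $e$. The cleanest route is to let $K_4\subseteq K$ be the set of (exactly four) taxa of $K$ whose parents lie on the $p$--$q$ path in $F'$; since $\{p,q\}$ is a cherry of $T$, at least three taxa of $K_4$ are singletons in the optimal $F^\ast$, and the component $U_{pq}$ containing $p,q$ has size at most $3$, so if it meets $K$ it meets it in a single taxon of $K_4$ and equals $\{p,q,x\}$. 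I would then cut off both $p$ and $q$ (at most $2$ cuts), which also isolates that $x$ if present, make at most $2$ further cuts to ``free'' the region spanned by $K_4$ in both $T$ and $F'$ --- these handle a possible bypass component that enters the $K$-region through $B$ and exits through $A$ or $C$, exactly as in the (iii)(b) discussion for the interrupted $4$-chain --- for a running total of at most $4$ cuts; replace the four singletons of $K_4$ by the single component $K_4$, saving $3$ cuts; and finally merge that component with whatever component(s) contain the remaining taxon of $K\setminus K_4$ and, if the fifth taxon $e\notin K_4$, also absorb $e$, which exists in some component and is thus merged for free, saving the $4$th cut. The net change in component count is $\leq 0$, $K$ is preserved, and both $p,q$ are singletons, so branch (i) or (ii) applies; case (iii)(a), where $U_{pq}\cap K=\emptyset$, is handled by the same bookkeeping with one of $p,q$ cut off and up to two cuts to free the $K_4$-region.

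The main obstacle I anticipate is \emph{not} the arithmetic of the cut-accounting --- which is routine and essentially identical to the 2-2-2 argument --- but rather verifying that ``free the $K_4$-region'' can always be achieved with at most two extra cuts, and in particular that once we have isolated $p,q$ (and possibly $x$), no component other than $U_{pq}$ can intersect the embedding $T[K]$ in a way requiring three or more cuts to remove; this needs the observation that $T[U_{pq}]$ already blocks one side of the $K$-region in $T$, limiting how a second component can thread through, together with care about the single non-$K_4$ taxon of $K$ (the one whose parent is \emph{not} on the $p$--$q$ path), which may sit in a component of $F^\ast$ that also meets $B$. One should also double-check that the chosen $K_4$ really does consist of exactly the four parents hit by the $p$--$q$ path for every admissible labelling of which arm is ``skipped'', i.e.\ that the asymmetry of the 2-2-1 star (one arm has the lone leaf $e$) does not create a configuration outside this template; I expect a short exhaustive check over the handful of choices of $\{P,Q\}$ and placements of $e$ to settle this, and indeed this is precisely the kind of small finite check that the computational base case will also absorb if one enlarges the enumeration slightly. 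Beyond that, as in Theorem~\ref{thm:3chain}, the remark about safely pruning $(T,F')$ pairs in which $T$ has a cherry outside $K$ or $F'$ has a singleton component applies here too and keeps the base-case search tractable.
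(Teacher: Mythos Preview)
Your overall plan (prove the stronger $(T,F')$ claim by induction, push $|X\setminus K|\leq 3$ into a computational base case, and in the inductive step find a cherry $\{p,q\}$ in one of $A,B,C$) is exactly the paper's route, and your treatment of the cases (i), (ii), and (iii)-same-side is fine. The gap is in the main case of (iii).

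You assert that the $p$--$q$ path in $F'$ hits the parents of ``exactly four'' taxa of $K$, and then build the entire cut-accounting on $|K_4|=4$. But this is only true when $\{P,Q\}$ is the pair of two-leaf arms. If one of $P,Q$ is the arm through $e$ (which you yourself flag as a possibility in the sentence before), the path hits only \emph{three} parents, say $K_3=\{a,b,e\}$. In that configuration your bookkeeping collapses: you may spend up to $4$ cuts (two for $p,q$ and up to two more to free the $K_3$-region in $T$), but merging the three singletons $a,b,e$ recovers only $2$, and merging with ``whatever component(s) contain the remaining taxa $c,d$'' recovers only $1$ more if $c$ and $d$ happen to sit together in a single component of $F^\ast$. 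The net is $+1$, not $\leq 0$, and you do not get a MAF. Your closing paragraph hopes this will wash out in a ``short exhaustive check'', but it does not: the difficulty is structural, not a labelling artefact.

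This $K_3$ case is in fact where essentially all the work in the paper's proof lies. The paper splits it further according to how $U_{pq}$ meets $\{c,d\}$: if $U_{pq}$ hits at most one of $c,d$, or misses both but passes through both their parents, then $c$ and $d$ end up in distinct components and the extra merge recovers two cuts (the ``positive situation''); if $U_{pq}$ avoids the parent of $c$ in $F'$ altogether, then only $3$ cuts (not $4$) were needed in the first place. The genuinely delicate subcase is $U_{pq}\supseteq\{c,d\}$ with the cherry $\{p,q\}\subseteq B$: there the paper uses a topology argument (if $U_{pq}$ also contained some $x\in A\cup C$, then $T|U_{pq}$ and $F'|U_{pq}$ would place $c,d$ in opposite orders) to rule out any taxon of $U_{pq}$ in $A\cup C$, which caps the freeing of the $K_3$-region at $3$ cuts total. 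None of this is captured by your template, so as written the inductive step is incomplete.
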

\begin{proof}
The proof follows the same structure as the proof of Theorem \ref{thm:32}. Indeed, large parts of the proof go through unchanged, as we now show. If the path from $p$ to $q$ in $F'$ passes through the parents of 4 of the taxa in $K = \{a,b,c,d,e\}$, which in this case are necessarily $\{a,b,c,d\}$, no non-trivial changes are required to the proof at all. 
The crucial fact is that after freeing up the parents of $\{a,b,c,d\}$ with at most 4 cuts, and merging the singletons $a,b,c,d$ to save 3 cuts, the component intersecting $e$ can be merged with $a,b,c,d$ to save the final cut (and this saving occurs even though $f$ is absent). 

So we only need to consider the case when the path from $p$ to $q$ passes through the parents of exactly 3 taxa in $K$, say $K_3=\{a,b,e\}$.  If $U_{pq}$ contains a taxon from $K_3$ then the parents above $K_3$ can be freed up in both $T$ and $F'$ with at most 3 cuts: cutting $p$ off, cutting $q$ off, and cutting at most one component (distinct from $U_{pq}$) that passes through the $K_3$ region in $T$. Merging the three singletons corresponding to taxa in $K_3$ saves 2 cuts. The final cut can be saved by once again merging with the at least one component that intersects with $\{c,d\}$.

Hence, the remaining case is when $U_{pq}$ does not contain any taxa from $K_3$. The problem here is that if 4 cuts are made to free up the $K_3$ region in both $T$ and $F'$, only 2 cuts are saved when merging the 3 singletons $a$, $b$, $e$ into a single component. This is not a problem if (after freeing up the $K_3$ region) there are \emph{two} or more components that intersect with $\{c,d\}$, since combining these components with the merged $K_3$ then saves at least 2 cuts. We call this the \emph{positive situation}. If $U_{pq}$ intersects with exactly one of the two taxa $\{c,d\}$, then the positive situation definitely occurs, so we are done. Similarly, if $U_{pq}$ passes through the parents of $c$ and $d$ in $F'$ but does not contain either $c$ or $d$, then $c$ and $d$ are both singletons so the positive situation definitely occurs. On the other hand, if $U_{pq}$ does not pass through the parent of $c$ in $F'$, then only 3 cuts rather than 4 were required to free up the $K_3$ region in the first place (one cut for $U_{pq}$, freeing up the $K_3$ region in $F'$, and possibly 2 cuts to free up the $K_3$ region in $T$): so the original saving of only 1 cut, rather than 2, is again enough. This brings us to the very final case: $U_{pq}$ contains both $c$ and $d$. Here we make 2 cuts to $U_{pq}$ to create components $\{p\}, \{q\}$ and $U_{pq} \setminus \{p,q\}$ (where necessarily $c$ and $d$ are both contained in the last component).

If the $\{p,q\}$ cherry is in $A$ or $C$, then no further cuts are required to free up the $K_3$ region in $T$, and the counting goes through easily. But what if the cherry is in $B$? Crucially, $U_{pq}$ cannot then contain any taxon $x \in A \cup C$. If it did, $\{p,q,c,d,x\}$ would have a different topology in $T$ than in $F'$ -- informally, $c$ and $d$ would occur in reversed order in $T|U_{pq}$ compared to $F'|U_{pq}$ -- contradicting the definition of an agreement forest. Hence, all that is required to free up the $K_3$ region in $T$ is to cut possibly one component (distinct from $U_{pq}$) whose embedding passes from $A$ to $C$. So 3 cuts are required to free up the $K_3$ region in total, which is sufficient (because we have avoided the bottleneck of requiring 4 cuts) and we are done.

To conclude the induction proof it suffices to prove the claim for $|X \setminus K| \leq 3$. We do this computationally, and this is summarized in Appendix \ref{subsec:bc221}.
\end{proof}

Although the proof of Theorem  \ref{thm:221} is significantly shorter than if it had been proven in the classical way, the inductive step still involves a number of cases. Also, quite a large amount of domain expertise is required to follow the quite intricate  agreement forest arguments. This raises the question of whether
the case analysis \emph{within the inductive step itself} can be automated. Closely relatedly: in each case in the inductive step we apply a cut-and-merge strategy to trigger the inductive claim. Can the search for and use of such strategies also be automated? We return to all these questions later in Section \ref{sec:luca}. Before that we turn to some interesting extra uses of our existing inductive framework.

\section{Exploratory hypothesis testing, disproof via base cases and from ``there exists'' to ``for all'.}
\label{sec:exploration}

An interesting ``inverted'' use of our methodology is follows. Suppose we wish to determine whether a new statement $S$ is true or not. If we can build an inductive argument -- in the same way as the earlier examples in this note -- and the inductive argument itself is correct, then the statement is true \emph{if and only if} the base case holds. This means that if the base case breaks, it gives us a concrete example of input trees for which the claim is false. This gives us a powerful tool for exploratory testing of hypotheses -- and if they are false, understanding why.

There is a subtlety here, however. All the proofs in this note are ultimately based on claims for $(T,F')$ pairs, not $(T,T')$ pairs: this generality is needed to make the inductive argument go through. Could it happen that a $(T,F')$ instance in the base case is false, but that the original statement (on two trees) is true? Fortunately, this is not the case. As the following lemma shows, it is always possible to convert an $(T,F')$ instance for which the claim is false, into a slightly larger pair of trees $(T_\alpha, T_\beta)$ for which the claim is also false.

\begin{figure}[h]
\centerline{\includegraphics{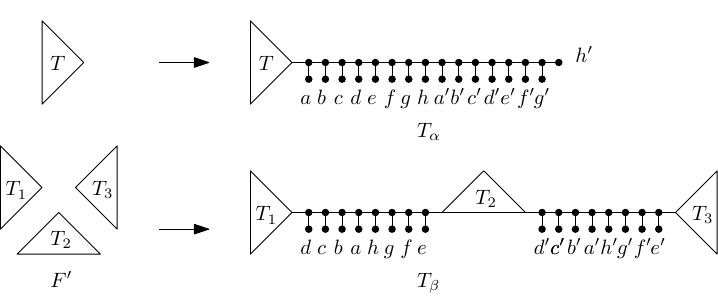}}
\caption{\label{fig:gadgetblock}A $(T,F')$ pair can be modified to obtain two trees $(T_\alpha, T_\beta)$ such that smallest agreement forests reachable from $(T,F')$ and $(T_\alpha,T_\beta)$ are in bijection. In $T_\alpha$ the blocking gadgets are chains $(a,b,c,d,e,f,g,h)$ and in $T_\beta$ they are chains $(d,c,b,a,h,g,f,e)$. From Lemma \ref{lem:block} each blocking gadget is split into two isolated components in 
every maximum agreement forest of $(T_\alpha, T_\beta)$.
}
\end{figure}

\begin{lem}
\label{lem:connect}
Let $(T,F')$ be a pair for which a particular combinatorial statement $S$ of the form ``amongst the smallest agreement forests reachable from this point, there exists at least one with desirable property $S$'' does \underline{not} hold. Then there exists a pair of trees $(T_\alpha, T_\beta)$ for which the statement also does not hold.
\end{lem}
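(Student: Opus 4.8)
The plan is to reconnect the forest $F'$ into a single tree $T_\beta$ by inserting \emph{blocking gadgets} between its components (Figure~\ref{fig:gadgetblock}), and to build $T_\alpha$ from $T$ by hanging copies of the same gadgets off $T$ as pendant chains, so that $T_\alpha|X=T$. Write $F'=\{U_0,\dots,U_k\}$ with component trees $F'_0,\dots,F'_k$; if $k=0$ then $F'$ is already a tree and we may take $(T_\alpha,T_\beta)=(T,F')$, so assume $k\ge 1$. Introduce $k$ blocking gadgets $G_1,\dots,G_k$ on $8k$ fresh taxa disjoint from $X$, each carrying the two chain orderings of Figure~\ref{fig:gadgetblock}. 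Build $T_\beta$ as the caterpillar whose spine reads $F'_0,G_1,F'_1,G_2,\dots,G_k,F'_k$: every $F'_j$ is attached as a pendant subtree (so that $T_\beta|U_j=F'_j$), and every $G_i$ is inserted between $F'_{i-1}$ and $F'_i$ in the ``crossed'' ordering $(d,c,b,a,h,g,f,e)$, with its $\{a,b,c,d\}$-end towards $F'_{i-1}$ and its $\{e,f,g,h\}$-end towards $F'_i$. Build $T_\alpha$ by subdividing a single edge of $T$ with $k$ new vertices in series and attaching $G_i$, in the ordering $(a,b,c,d,e,f,g,h)$, as a pendant chain at the $i$-th of them; suppressing those degree-$2$ vertices recovers exactly $T$, so $T_\alpha|X=T$.

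The core of the argument is a structural claim about an arbitrary MAF $F$ of $(T_\alpha,T_\beta)$: (a)~each $G_i$ contributes exactly two components to $F$, both made up only of taxa of $G_i$; and (b)~no component of $F$ contains taxa from two different blocks $U_j$. Claim~(a), together with the fact that a blocking gadget cannot be preserved, is precisely the content of Lemma~\ref{lem:block}. For~(b): by~(a) the only components meeting $G_i$ are its two pure-gadget components, so no component can mix a gadget taxon with a taxon of $X$ or of a different gadget; in particular any block-spanning component $W$ satisfies $W\subseteq X$. Suppose such a $W$ contained $u\in U_{i-1}$ and $v\in U_i$ for some $i$ (spanning farther along the spine is analogous). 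Then $T_\beta[W]$ runs along the spine between $u$ and $v$ and hence passes through every internal spine vertex of $G_i$, i.e.\ through the $T_\beta$-parent of each of the eight taxa of $G_i$. By vertex-disjointness of embeddings in an agreement forest, no taxon of $G_i$ can then lie in a component of size $\ge 2$, so $G_i$ would split into eight singletons --- contradicting~(a). This proves~(b).

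Granting~(a) and~(b), the bijection is bookkeeping. Removing from a MAF $F$ of $(T_\alpha,T_\beta)$ its $2k$ gadget components leaves a partition $F_0$ of $X$ refining $\{U_0,\dots,U_k\}$; conditions~(1) and~(2) in the definition of an agreement forest of $(T,F')$ follow from $T_\alpha|X=T$, $T_\beta|U_j=F'_j$ and the disjointness inherited from $F$, so $F_0$ is an agreement forest of $(T,F')$ reachable by cutting in $F'$. Conversely, adjoining to any such $F_0$ the fixed $2k$ gadget components supplied by Lemma~\ref{lem:block} yields an agreement forest of $(T_\alpha,T_\beta)$ (here one checks that the two halves $\{a,b,c,d\}$ and $\{e,f,g,h\}$ of each gadget have the same, reversed, chain topology in $T_\alpha$ and $T_\beta$, and that all embeddings involved are disjoint). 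Since $|F|=|F_0|+2k$ in both directions, $F$ is a MAF of $(T_\alpha,T_\beta)$ if and only if $F_0$ attains the minimum number of components over agreement forests of $(T,F')$ reachable by cutting in $F'$. Finally, because the desirable property named in $S$ depends only on how the original taxa $X$ are partitioned, $F$ has it exactly when $F_0$ does; as no reachable smallest agreement forest of $(T,F')$ has the property, neither does any MAF of $(T_\alpha,T_\beta)$, i.e.\ $S$ fails for $(T_\alpha,T_\beta)$ as well.

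The step I expect to be the main obstacle is part~(b) of the structural claim --- excluding a component that ``bypasses'' a gadget inside $T_\beta$ --- even though the disjointness argument above keeps it short; it is also the place where the precise blocking behaviour guaranteed by Lemma~\ref{lem:block} is genuinely used. The remaining care is in the boundary cases: single-taxon blocks $U_j$, verifying that the pendant placement of each $G_i$ in $T_\alpha$ really yields $T_\alpha|X=T$ and does not create a topology in which a mixed gadget/original component becomes legal, and tracking the additive constant $2k$ consistently through both directions of the bijection. None of these requires case analysis beyond what is sketched above.
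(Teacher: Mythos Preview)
Your proof is correct and follows essentially the same approach as the paper: insert $k$ blocking gadgets (with orderings $(a,b,c,d,e,f,g,h)$ in $T_\alpha$ and $(d,c,b,a,h,g,f,e)$ in $T_\beta$) to connect the components of $F'$ into a tree, hang the same gadgets off $T$ to obtain $T_\alpha$, and invoke Lemma~\ref{lem:block} to conclude that every MAF of $(T_\alpha,T_\beta)$ isolates each gadget as two pure components, so that the MAFs of $(T_\alpha,T_\beta)$ are in size-shifted bijection with the smallest agreement forests reachable from $(T,F')$. If anything, you supply more detail than the paper does: your explicit argument for claim~(b)---that a non-gadget component spanning two blocks $U_{i-1},U_i$ would force all eight taxa of $G_i$ to be singletons, contradicting Lemma~\ref{lem:block}---is exactly the step the paper compresses into the single clause ``which means \dots\ that this MAF can be obtained by cutting in $F'$''.
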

\begin{proof}
Let $k$ be the number of components in $F'$. We introduce $(k-1)$ new chains, each on 8 taxa, which are inserted between the components of $F'$ to connect the components together into a tree $T_\beta$. In $T_\alpha$ the order of taxa in each chain is $a,b,c,d,e,f,g,h$. In $T_\beta$ the order of the taxa in the chain is $d,c,b,a,h,g,f,e$. In $T_\alpha$ the chains can be placed in arbitrary locations outside $T$. These chains are called \emph{blocking gadgets} (based on an idea in \cite{kelk2024agreement}). See Figure \ref{fig:gadgetblock}.  Lemma \ref{lem:block} in the appendix proves (via Lemma \ref{lem:4everywhere} presented later in this section) that, for every blocking gadget $\{a,b,c,d,e,f,g,h\}$, \emph{every} MAF of $(T_\alpha, T_\beta)$ contains components $\{a,b,c,d\}$ and $\{e,f,g,h\}$. In other words, every MAF of $(T_\alpha, T_\beta)$ isolates all the blocking gadgets into their own components, which means -- ignoring the blocking gadgets -- that this MAF can be obtained by cutting in $F'$.
This means that every smallest agreement forest reachable from $(T,F')$ (that has property $S$) can be directly mapped to a MAF of $(T_\alpha, T_\beta)$ (that has property $S$), and vice-versa.
\end{proof}

Even if the inductive step has not been written down, it is still possible to search for counter-examples within the (purported) base case, or slightly beyond it. We give here a number of illustrative examples. 

\begin{enumerate}
\item 
\begin{figure}[h]
\centerline{\includegraphics{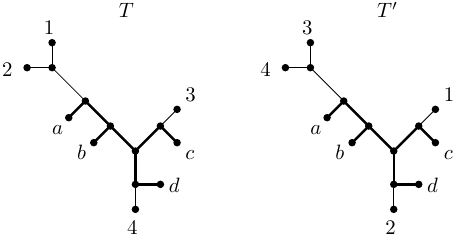}}
\medskip
\centerline{\includegraphics{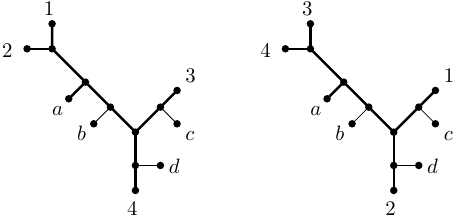}}
\caption{\label{fig:211star} The common 2-1-1 star structure at the centre of the figure (the part spanning the taxa $a,b,c,d$) is not guaranteed to be preserved by some MAF. In the two trees shown, preserving this region leads to strictly sub-optimal solutions with at least~$5$ components (indicated in bold in the top copy of the trees). An optimal solution has only~$4$ components (indicated in bold in the bottom copy of the trees).} 
\end{figure}

Consider Figure \ref{fig:211star}. Is this common 2-1-1 star structure preserved by some MAF? It is not immediately obvious. Rather than attempting to prove it by hand, we ran our base case solver for $|X \setminus K| \leq 4$\footnote{We used the command \texttt{./disproofExistsOnlyTrees.sh ./211star.input 4}. The counter-example was the 1843th tree pair considered, found after 97 seconds. Solve times for speculative counter-example detection tend to be higher than for positive proofs because pruning inside the ``base case'' is not allowed and because counter-examples often exist at higher values of $|X \setminus K|$.}. For $|X \setminus K| \in \{0,1,2,3\}$ no counter-examples were found. However,
at $|X \setminus K| = 4$ we found the counter example, also shown in Figure \ref{fig:211star}. 
A MAF of these two trees contains 4 components (e.g. $\{3, 4, a, 1, 2\}, \{b\}, \{c\}, \{d\}$) but if the 2-1-1 region is preserved, at least 5 components are required. Presumably we would have eventually found this counter-example by hand, but the use of computers greatly accelerated the process.

\item Are common 2-chains always preserved by some MAF? The answer is \emph{no}. Our code found a counter-example at $|X \setminus K|=5$ and this is the \emph{smallest} such counter-example\footnote{We used the command
\texttt{./disproofExistsOnlyTrees.sh ./2chain.input 5}. The counter-example was the 1907th tree pair considered, found after 100 seconds.}. The counter example consists of two linear trees (so-called caterpillars). In tree $T$ the taxa are in the order $1, a, b, 2, 3, 4, 5$ and in $T'$ the order $4, a,b, 5, 3, 1, 2$. If the 2-chain is preserved, at least 4 components are needed: for example $\{a,b,2,3\}, \{1\}, \{4\},$ and $\{5\}$. However, if it is not necessary to preserve the chain, 3 components are sufficient: $\{2,1,3,4,5\}, \{a\}, \{b\}$.

\end{enumerate}

Our framework can be quite easily extended to testing statements about \emph{all} MAFs, not just one. Recall Theorem \ref{thm:interrupted4chain}, about the interrupted 4-chain. We proved this statement by establishing that \emph{some} MAF preserves the 4-chain $K$, and hence the interrupter edge cannot be used by any component of that MAF. Could it be true that every MAF preserves $K$? The answer is \emph{no}: a counter-example exists already at $|X \setminus K|=2$. We found this quickly by running our base-case checking code and testing ``is $K$ preserved in \emph{every} smallest MAF reachable from $(T,F')$?'' The ILP can easily be adapted for this purpose: we first apply the ILP to $(T,F')$ to obtain OPT1. Next we add constraints which enforce that \emph{at least one} edge of the chain $K$ is cut, obtaining OPT2. If OPT1 and OPT2 are equal, the claim is false for this $(T,F')$ (because preservation is not required to attain optimality), otherwise the claim is true. The counter example is: $p \in B$, $q \in C$, $A=\emptyset$;  $p \in E$, $q \in D$. If $K$ is preserved, 3 components are necessary, namely $\{a,b,c,d\}$, $\{p\}$, $\{q\}$. But this can also be achieved by $\{p,c,d\}, \{a,b\}, \{q\}$, which breaks up the chain and uses the interrupter edge.

This adaptation from some MAF to all MAFs can also be used to fairly easily prove stronger statements (when they are true). For example:

\begin{lem}
\label{lem:4everywhere}
Let $K$ be a 
common $4$-chain. Then \underline{every} optimal solution to MAF preserves $K$.
\end{lem}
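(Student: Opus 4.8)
The plan is to re-use the inductive machinery of Theorems~\ref{thm:3chain}, \ref{thm:interrupted4chain} and~\ref{thm:32} almost verbatim, changing only the test applied in the base case. Instead of asking ``is $K$ preserved by \emph{some} smallest reachable forest'', I would ask ``is $K$ preserved by \emph{every} smallest reachable forest'', which, as explained just above, is decided by solving the base-case ILP once to obtain OPT1 and then again with the extra constraint that at least one edge of $F'[K]$ is cut to obtain OPT2: the property holds for that $(T,F')$ exactly when OPT1 $<$ OPT2. Concretely, I would prove by induction on $|X\setminus K|$ the following $(T,F')$-level statement: if $T$ and $F'$ have a common $4$-chain $K$ that is preserved in $F'$, then every agreement forest with the minimum number $k$ of components reachable by cutting in $F'$ preserves $K$. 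Lemma~\ref{lem:4everywhere} then follows by taking $F'=T'$, and Lemma~\ref{lem:connect} shows that a failing base-case instance would yield a genuine pair of trees violating the statement, so (once the inductive step is in place) the base case is both sufficient and necessary.

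I would dispose of the base case, here $|X\setminus K|\le 2$, computationally by enumerating the relevant $(T,F')$ exactly as before and running the ``every smallest forest'' ILP test. For the inductive step, assume $|X\setminus K|\ge 3$. Since around a plain common $4$-chain the parent path $p_a p_b p_c p_d$ splits $T$ into only two parts outside the chain, one of these parts has at least two taxa and hence $T$ contains a cherry $\{p,q\}$ disjoint from $K$. Let $F^{*}$ be any reachable agreement forest with exactly $k$ components, and suppose for contradiction that $F^{*}$ does not preserve $K$. Exactly as in the earlier proofs, $\{p,q\}$ being a cherry of $T$ forces one of: (i) $p$ is a singleton of $F^{*}$; (ii) $q$ is a singleton of $F^{*}$; (iii) $p$ and $q$ lie in a common component $U_{pq}$ of $F^{*}$. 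In cases (i) and (ii) we delete the singleton taxon, observe that the restricted forest is still of minimum size for the smaller instance and still fails to preserve $K$, and contradict the induction hypothesis. In case (iii), if $p$ and $q$ lie on the same side of $K$ in $F'$, we reduce to a smaller instance by collapsing the common cherry on $p,q$ (after, if necessary, the forced edge cuts along the path from $p$ to $q$, none of which touches $F'[K]$), exactly as in the proof of Theorem~\ref{thm:3chain}, and again contradict the induction hypothesis. This leaves only case~(iii) with $p$ and $q$ on opposite sides of $K$.

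The opposite-sides case is where a $4$-chain behaves genuinely differently from a $3$-chain, and here I would argue directly. The path from $p$ to $q$ in $F'$ runs through all four parents $p_a,p_b,p_c,p_d$, and since $\{p,q\}$ is a cherry of $T$ the component $U_{pq}$ contains at most one taxon of $K$. If $U_{pq}\cap K=\emptyset$, then $a,b,c,d$ are all singletons of $F^{*}$; cutting off whichever of $p,q$ detaches $U_{pq}$ from the parents of $K$ (one cut), and then cutting at most one ``bypass'' component whose image in $T$ runs through the chain region (one cut --- at most one such component can exist, since the plain chain has only two outside parts and any two bypasses would both use the edge $p_b p_c$), frees the chain region in both $T$ and $F'$. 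If instead $U_{pq}\cap K=\{x\}$, then $U_{pq}=\{p,q,x\}$, the three remaining taxa of $K$ are singletons, no bypass component can exist (its image in $T$ would meet $p_x\in T[U_{pq}]$), and splitting $U_{pq}$ into singletons $p,q,x$ (two cuts) frees the chain region. In both sub-cases we then replace the four singletons $a,b,c,d$ by the single component $K$ --- legitimate since $K$ is a common chain --- so we have made at most two cuts and reduced the component count by three, producing an agreement forest reachable from $(T,F')$ that preserves $K$ and has only $k-1$ components, contradicting the minimality of $k$. Hence case~(iii) opposite-sides cannot occur, and the inductive step is complete. (Running the same bookkeeping for a $3$-chain gives ``at most two cuts, save two'', that is, only a tie; this is exactly why Theorem~\ref{thm:3chain} cannot be upgraded to an ``every MAF'' statement, and why the explicit $2$- and $3$-chain counterexamples recorded earlier exist.)

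I expect the main obstacle to be the bookkeeping in the opposite-sides case: making rigorous that a \emph{plain} common $4$-chain admits at most one bypass component in $T$ (so that freeing the chain region costs at most two cuts here, rather than three as in the interrupted $4$-chain of Theorem~\ref{thm:interrupted4chain}), and checking that the modified forest is genuinely obtainable from $F'$ by cutting. The remaining ingredients --- the choice of cherry, the collapse/delete reductions, and the $F'=T'$ specialization --- are routine in light of Theorems~\ref{thm:3chain}--\ref{thm:32}.
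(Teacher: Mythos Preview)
Your proof is correct and follows essentially the same route as the paper: both re-run the induction of Theorem~\ref{thm:3chain} with the strengthened ``every minimum forest preserves $K$'' hypothesis, note that for a $4$-chain the opposite-sides sub-cases (iii)(a) and (iii)(b) are impossible because the cut-and-merge strictly decreases the component count (at most two cuts, three merges), and discharge the base case $|X\setminus K|\le 2$ computationally with the ``for all'' ILP test. One small caveat: your appeal to Lemma~\ref{lem:connect} is circular (its proof, via Lemma~\ref{lem:block}, uses the present lemma), but since that remark plays no role in your actual argument it does no harm.
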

\begin{proof}
If we trace through the proof of
Theorem \ref{thm:3chain}, observe that a strengthened inductive argument goes through: in fact, (iii)(a) and (iii)(b) can no longer happen, as this would contradict the assumption that $F^{*}$ was the \emph{smallest} possible agreement forest reachable from the given starting point (due to more cuts being saved by merging, than are spent in the cutting phase). So it remains only to prove 
for every $(T,F')$ (where $|X \setminus K| \leq 2$) that $K$ is preserved in \emph{every} optimal solution. This can be verified by hand or by computer, using the modification to the ILP discussed above (see Appendix \ref{subsec:bc4all}).
\end{proof}

To see that Lemma~\ref{lem:4everywhere} can not be strengthened to 
$3$-chains, consider Figure~\ref{fig:3chain}.
\begin{figure}
    \centering
    \includegraphics{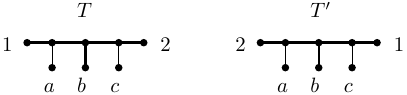}
    \caption{Example of two trees~$T,T'$ with a common $3$-chain and an optimal solution to MAF (indicated in bold) that does not preserve the $3$-chain.}
    \label{fig:3chain}
\end{figure}

\section{Next steps: explicit enumeration both inside and outside induction proofs.}
\label{sec:luca}

As we have shown our inductive framework helps to simplify case-heavy hand-written proofs in this area. There are still areas for improvement however. We list these and then propose a unified response to be explored in future work.

\begin{enumerate}
\item The original Theorem 5 of \cite{kelk2020new} is not just about preserving a single chain but actually about simultaneously preserving a taxa-disjoint \emph{set} of chains. We observed that our proof for single chains (Theorem \ref{thm:3chain}) does not easily extend to sets of chains. The problem is that every cherry $\{p,q\}$ of $T$  might intersect with at least one of the chains in the set. It is possible to deal with this but it involves further technical analysis. Interestingly, the case-heavy hand-written proof of Theorem 5 of \cite{kelk2020new} starts with the argument that a single chain is preserved, and then extends it with very little effort to multiple chains. This is because the hand-written proof, which has a similar cut-and-merge core to our proof, is emphatically \emph{local}. The cutting and merging only affects one chain at a time \emph{and does not impact upon other chains}. Hence, the chains can be preserved iteratively, one at a time. Our inductive attack does not have this explicitly local character, which is why extending it to multiple chains is non-trivial.
\item As the proof of Theorem \ref{thm:221} demonstrates the inductive step can itself remain quite case-heavy, even if a chunk of the complexity has been pushed into a base case. If this case analysis becomes too heavy, it nullifies the advantage of the computational base case. 
\item Once the cases within the inductive step have been identified, each case needs to be dealt with in order to leverage the inductive assumption (which is assumed to hold for smaller numbers of taxa). In this note we have applied ad-hoc cut-and-merge arguments to show that a smallest agreement forest reachable from this point satisfies the combinatorial statement in question. However, the identification and use of these cut-and-merge strategies still requires quite some domain expertise.
\end{enumerate}
Issues 1-3 are very closely related. We believe that the solution lies at least partly in an earlier unpublished version of our work in which we did not use an inductive framework at all. Instead, we tried to mechanically enumerate all the different ways the images of MAF components can interact with the chain region in both trees. The key observation here was that the exact topologies of the parts of the trees outside the chain region only have limited significance for these interactions -- and thus this external region can be represented very compactly with a small set of `meta'-taxa\footnote{This is similar in spirit to the way dynamic programs over tree decompositions leverage small separators to obtain algorithmic efficiency.}. For each mechanically enumerated interaction (addressing issue 2), we can inspect whether it obeys the combinatorial statement, and if it does not, show that some local (addressing issue 1) cut-and-merge strategy is sufficient to restore the combinatorial statement without losing optimality (addressing issue 3). Intriguingly, if the `ground rules' for a legal cut-and-merge strategy can be formalized (requiring e.g. the formalization of agreement forests in terms of quartets), almost the \emph{entire pipeline} could be automated, since all interactions and possible cut-and-merge strategies exist and can thus be found in a finite, enumerable space. Partial deployments are also possible. For example, the enumeration of interactions could be used inside our inductive framework to ensure that all the cases in the inductive step (see issue 2) have been covered.

Such an approach potentially brings more flexibility than our current, primarily inductive framework, and lowers the dependency on domain expertise. The downside is that it requires considerable groundwork to set up the mathematical foundations for the framework, and the running times of the computational parts might be significantly larger (due to having larger spaces to enumerate). We therefore defer this work to a follow-up publication.

\section{Conclusion}

In this note we have shown how some of the complexity in hand-written proofs in phylogenetics can be pushed into a computational core. We have focussed here on static, combinatorial statements about agreement forests, but we are confident that the relevance reaches further. These combinatorial statements (``some optimal agreement forest has property $S$'') are often the foundation of the correctness of reduction rules (kernelization) and branching algorithms. Moreover, similar frameworks can also plausibly be deployed to help tighten bounds on the size of kernels, and to discover new branching rules. We are also confident that similar techniques will work for many non-agreement forest models in phylogenetics, since the canonical ``show by local modification that there is at least one optimal solution with a certain desirable property'' is highly generic.
Alongside the extensions discussed in Section \ref{sec:luca} and the ever-present need to keep running times reasonable, we anticipate very many opportunities for future research.

\section{Acknowledgements}
Ruben Meuwese was supported by grant~OCENW.GROOT.2019.015 from the Dutch Research Council (NWO).

\bibliography{main}{}
\bibliographystyle{plain}

\section{Appendix}
Running times, where reported, are on a Samsung Galaxy Book4 Edge with a Snapdragon(R) X Elite - X1E80100 - Qualcomm(R) Oryon(TM) CPU (3.42 GHz) and 16G of RAM.

\subsection{Computational proof of base-case for preservation of common 3-chain.}
\label{subsec:bc3chain}

There are actually two executions here: one for the case when $K$ is a common 3-chain, and the second is for when $K$ is a common 2-chain that is pendant in $T'$. 

\subsubsection{$K$ is a 3-chain}
The input to our code is:
\begin{verbatim}
(a,(1,(2,(3,b))));
(a,(1,(2,(3,b))));
\end{verbatim}
The code assumes that the part of each tree spanned by the numerical taxa (1,2,3) is the preserved region. Other taxa (here $a$ and $b$) are simply placeholders where the taxa from $X \setminus K$ will be added. There is no correspondence between the $a$ and $b$ in the first tree and the same labels in the second tree.

Here the base case concerns $|X \setminus K| \leq 2$. We solve the base case as follows:
\begin{verbatim}
./proofExistsPrune.sh 3chain.input 2
\end{verbatim}
Here 3chain.input simply contains the two trees presented above. Note that this variant of the code discards any pairs $(T, F')$ where $T$ contains at least one cherry outside the preserved region, or $F'$ contains at least one singleton component; this is what \emph{prune} refers to in the command name. This is safe, and reduces the number of cases somewhat (see the discussion at the end of Section \ref{subsec:getstarted}). On the GitHub page an execution trace is also provided where there is no such pruning. The execution trace is below. Note that the taxa ``$X_i$'' are the taxa from $X \setminus K$. The ``$x_i$'' used to describe each of the forests $F'$ have a different meaning (they correspond to decision variables in the underlying ILP, which are in bijection with edges of $T'$).

\begin{verbatim}
Reading trees from 3chain.input.
Will add up to 2 taxa.
REMOVECHERRIES = yes
LOGS = no
STARTTAXA = 0
FORALL = no
ALLOWFORESTS = yes
ALLOWSINGLETONFORESTS = no
Note: in the execution below, FOREST1 indicates the situation F' = T'.
*****************************************************************
First tree: (a,(1,(2,(3,b))));
Second tree: (a,(1,(2,(3,b))));
*****************************************************************
Going to try expansions with 0 taxa, ... all the way up to 2 taxa.
*****************************************************************
********* OUTERLOOP: |X \setminus K| has exactly 0 taxa. ********
*****************************************************************
Found 1 trees before symmetry reduction:
(1,(2,3));
Found these unique patterns:
(1,(2,3));
1 patterns in total.
Selecting one tree per pattern.
(1,(2,3));   (1,(2,3));
1 canonical trees generated.
Putting these canonical trees (without their pattern) in file ./firstTreesPreCherry.
(1,(2,3));
=======================
Expansions first tree:
(1,(2,3));
Removing expansions of first tree that have a cherry in the first tree
(outside the preserved region).
0 of the trees had such a cherry.
Expansions of first tree after removing trees with cherries:
(1,(2,3));
=======================
Expansions second tree:
(1,(2,3));
=======================
Summary:
1 first trees
1 second trees
Total number of tree pairs:
1
=======================
======================
Combination 1 1 (i.e. tree pair 1)
(1,(2,3));
(1,(2,3));
Considering all the forests. (Singleton forests allowed = no)
FOREST1: x1 <= 1
OPT without modification: 0
OPT with modification: 0
*****************************************************************
********* OUTERLOOP: |X \setminus K| has exactly 1 taxa. ********
*****************************************************************
Found 2 trees before symmetry reduction:
(1,(2,(3,X1)));
(X1,(1,(2,3)));
Found these unique patterns:
(1,(2,(3,Y)));
(Y,(1,(2,3)));
2 patterns in total.
Selecting one tree per pattern.
(1,(2,(3,X1)));   (1,(2,(3,Y)));
(X1,(1,(2,3)));   (Y,(1,(2,3)));
2 canonical trees generated.
Putting these canonical trees (without their pattern) in file ./firstTreesPreCherry.
(1,(2,(3,X1)));
(X1,(1,(2,3)));
=======================
Expansions first tree:
(1,(2,(3,X1)));
(X1,(1,(2,3)));
Removing expansions of first tree that have a cherry in the first tree
(outside the preserved region).
0 of the trees had such a cherry.
Expansions of first tree after removing trees with cherries:
(1,(2,(3,X1)));
(X1,(1,(2,3)));
=======================
Expansions second tree:
(1,(2,(3,X1)));
(X1,(1,(2,3)));
=======================
Summary:
2 first trees
2 second trees
Total number of tree pairs:
4
=======================
======================
Combination 1 1 (i.e. tree pair 1)
(1,(2,(3,X1)));
(1,(2,(3,X1)));
Considering all the forests. (Singleton forests allowed = no)
FOREST1: x1 <= 1
OPT without modification: 0
OPT with modification: 0
======================
Combination 1 2 (i.e. tree pair 2)
(1,(2,(3,X1)));
(X1,(1,(2,3)));
Considering all the forests. (Singleton forests allowed = no)
FOREST1: x1 <= 1
OPT without modification: 1
OPT with modification: 1
======================
Combination 2 1 (i.e. tree pair 3)
(X1,(1,(2,3)));
(1,(2,(3,X1)));
Considering all the forests. (Singleton forests allowed = no)
FOREST1: x1 <= 1
OPT without modification: 1
OPT with modification: 1
======================
Combination 2 2 (i.e. tree pair 4)
(X1,(1,(2,3)));
(X1,(1,(2,3)));
Considering all the forests. (Singleton forests allowed = no)
FOREST1: x1 <= 1
OPT without modification: 0
OPT with modification: 0
*****************************************************************
********* OUTERLOOP: |X \setminus K| has exactly 2 taxa. ********
*****************************************************************
Found 4 trees before symmetry reduction:
(1,(2,(3,(X1,X2))));
(X1,(1,(2,(3,X2))));
(X2,(1,(2,(3,X1))));
((X1,X2),(1,(2,3)));
Found these unique patterns:
((Y,Y),(1,(2,3)));
(1,(2,(3,(Y,Y))));
(Y,(1,(2,(3,Y))));
3 patterns in total.
Selecting one tree per pattern.
(1,(2,(3,(X1,X2))));   (1,(2,(3,(Y,Y))));
(X1,(1,(2,(3,X2))));   (Y,(1,(2,(3,Y))));
((X1,X2),(1,(2,3)));   ((Y,Y),(1,(2,3)));
3 canonical trees generated.
Putting these canonical trees (without their pattern) in file ./firstTreesPreCherry.
(1,(2,(3,(X1,X2))));
(X1,(1,(2,(3,X2))));
((X1,X2),(1,(2,3)));
=======================
Expansions first tree:
(1,(2,(3,(X1,X2))));
(X1,(1,(2,(3,X2))));
((X1,X2),(1,(2,3)));
Removing expansions of first tree that have a cherry in the first tree
(outside the preserved region).
2 of the trees had such a cherry.
Expansions of first tree after removing trees with cherries:
(X1,(1,(2,(3,X2))));
=======================
Expansions second tree:
(1,(2,(3,(X1,X2))));
(X1,(1,(2,(3,X2))));
(X2,(1,(2,(3,X1))));
((X1,X2),(1,(2,3)));
=======================
Summary:
1 first trees
4 second trees
Total number of tree pairs:
4
=======================
======================
Combination 1 1 (i.e. tree pair 1)
(X1,(1,(2,(3,X2))));
(1,(2,(3,(X1,X2))));
Considering all the forests. (Singleton forests allowed = no)
FOREST1: x1 <= 1
OPT without modification: 1
OPT with modification: 1
FOREST2: x6 = 1
OPT without modification: 2
OPT with modification: 2
======================
Combination 1 2 (i.e. tree pair 2)
(X1,(1,(2,(3,X2))));
(X1,(1,(2,(3,X2))));
Considering all the forests. (Singleton forests allowed = no)
FOREST1: x1 <= 1
OPT without modification: 0
OPT with modification: 0
======================
Combination 1 3 (i.e. tree pair 3)
(X1,(1,(2,(3,X2))));
(X2,(1,(2,(3,X1))));
Considering all the forests. (Singleton forests allowed = no)
FOREST1: x1 <= 1
OPT without modification: 2
OPT with modification: 2
======================
Combination 1 4 (i.e. tree pair 4)
(X1,(1,(2,(3,X2))));
((X1,X2),(1,(2,3)));
Considering all the forests. (Singleton forests allowed = no)
FOREST1: x1 <= 1
OPT without modification: 1
OPT with modification: 1
FOREST2: x3 = 1
OPT without modification: 2
OPT with modification: 2
----------------------------------------------
----------------------------------------------
No counter-example found: BASE CASE TRUE!
----------------------------------------------
----------------------------------------------

\end{verbatim}

\subsubsection{$K$ is a 2-chain (and pendant in $T'$)}

Here the input file (2chain.input) contains the trees:
\begin{verbatim}
(a,(1,(2,b)));
(a,(1,2));
\end{verbatim}
Note how the second tree has a placeholder taxon missing: this indicates that
the cherry is pendant in the second tree.
The executation trace is here.

\begin{verbatim}
Reading trees from 2chainPendant.input.
Will add up to 2 taxa.
REMOVECHERRIES = yes
LOGS = no
STARTTAXA = 0
FORALL = no
ALLOWFORESTS = yes
ALLOWSINGLETONFORESTS = no
Note: in the execution below, FOREST1 indicates the situation F' = T'.
*****************************************************************
First tree: (a,(1,(2,b)));
Second tree: (a,(1,2));
*****************************************************************
Going to try expansions with 0 taxa, ... all the way up to 2 taxa.
*****************************************************************
********* OUTERLOOP: |X \setminus K| has exactly 0 taxa. ********
*****************************************************************
Found 1 trees before symmetry reduction:
(1,2);
Found these unique patterns:
(1,2);
1 patterns in total.
Selecting one tree per pattern.
(1,2);   (1,2);
1 canonical trees generated.
Putting these canonical trees (without their pattern) in file ./firstTreesPreCherry.
(1,2);
=======================
Expansions first tree:
(1,2);
Removing expansions of first tree that have a cherry in the first tree
(outside the preserved region).
0 of the trees had such a cherry.
Expansions of first tree after removing trees with cherries:
(1,2);
=======================
Expansions second tree:
(1,2);
=======================
Summary:
1 first trees
1 second trees
Total number of tree pairs:
1
=======================
======================
Combination 1 1 (i.e. tree pair 1)
(1,2);
(1,2);
Considering all the forests. (Singleton forests allowed = no)
FOREST1: x1 <= 1
OPT without modification: 0
OPT with modification: 0
*****************************************************************
********* OUTERLOOP: |X \setminus K| has exactly 1 taxa. ********
*****************************************************************
Found 2 trees before symmetry reduction:
(1,(2,X1));
(X1,(1,2));
Found these unique patterns:
(1,(2,Y));
(Y,(1,2));
2 patterns in total.
Selecting one tree per pattern.
(1,(2,X1));   (1,(2,Y));
(X1,(1,2));   (Y,(1,2));
2 canonical trees generated.
Putting these canonical trees (without their pattern) in file ./firstTreesPreCherry.
(1,(2,X1));
(X1,(1,2));
=======================
Expansions first tree:
(1,(2,X1));
(X1,(1,2));
Removing expansions of first tree that have a cherry in the first tree
(outside the preserved region).
0 of the trees had such a cherry.
Expansions of first tree after removing trees with cherries:
(1,(2,X1));
(X1,(1,2));
=======================
Expansions second tree:
(X1,(1,2));
=======================
Summary:
2 first trees
1 second trees
Total number of tree pairs:
2
=======================
======================
Combination 1 1 (i.e. tree pair 1)
(1,(2,X1));
(X1,(1,2));
Considering all the forests. (Singleton forests allowed = no)
FOREST1: x1 <= 1
OPT without modification: 0
OPT with modification: 0
======================
Combination 2 1 (i.e. tree pair 2)
(X1,(1,2));
(X1,(1,2));
Considering all the forests. (Singleton forests allowed = no)
FOREST1: x1 <= 1
OPT without modification: 0
OPT with modification: 0
*****************************************************************
********* OUTERLOOP: |X \setminus K| has exactly 2 taxa. ********
*****************************************************************
Found 4 trees before symmetry reduction:
(1,(2,(X1,X2)));
(X1,(1,(2,X2)));
(X2,(1,(2,X1)));
((X1,X2),(1,2));
Found these unique patterns:
((Y,Y),(1,2));
(1,(2,(Y,Y)));
(Y,(1,(2,Y)));
3 patterns in total.
Selecting one tree per pattern.
(1,(2,(X1,X2)));   (1,(2,(Y,Y)));
(X1,(1,(2,X2)));   (Y,(1,(2,Y)));
((X1,X2),(1,2));   ((Y,Y),(1,2));
3 canonical trees generated.
Putting these canonical trees (without their pattern) in file ./firstTreesPreCherry.
(1,(2,(X1,X2)));
(X1,(1,(2,X2)));
((X1,X2),(1,2));
=======================
Expansions first tree:
(1,(2,(X1,X2)));
(X1,(1,(2,X2)));
((X1,X2),(1,2));
Removing expansions of first tree that have a cherry in the first tree
(outside the preserved region).
2 of the trees had such a cherry.
Expansions of first tree after removing trees with cherries:
(X1,(1,(2,X2)));
=======================
Expansions second tree:
((X1,X2),(1,2));
=======================
Summary:
1 first trees
1 second trees
Total number of tree pairs:
1
=======================
======================
Combination 1 1 (i.e. tree pair 1)
(X1,(1,(2,X2)));
((X1,X2),(1,2));
Considering all the forests. (Singleton forests allowed = no)
FOREST1: x1 <= 1
OPT without modification: 1
OPT with modification: 1
FOREST2: x3 = 1
OPT without modification: 2
OPT with modification: 2
----------------------------------------------
----------------------------------------------
No counter-example found: BASE CASE TRUE!
----------------------------------------------
----------------------------------------------
\end{verbatim}

\subsection{Computational proof of base-case for interrupted-4-chain.}
\label{subsec:bcinterrupt}

Here $|X \setminus K| \leq 3$. The input trees (in the file int4chain.input) are:

\begin{verbatim}
(c,(1,(2,(e,(3,(4,d))))));
(a,(1,(2,(3,(4,b)))));
\end{verbatim}
The command is
\begin{verbatim}
./proofExistsPrune.sh int4chain.input 3
\end{verbatim}
The execution trace is too large to show here, but it is available on the GitHub page. It takes 2 seconds to complete, looping through 31 $(T,T')$ pairs in total (and in each case all forests $F'$ obtained from $T'$ that do not contain singletons: in total 61 $(T,F')$ pairs are considered). When $|X \setminus K|=3$ there is 1 candidate for $T$ and 12 candidates for $T'$. If we had not applied base case pruning, there would be 10 candidates for $T$. 

\subsection{Computational proof of base-case for preserved 2-2-2 star theorem.}
\label{subsec:bc32}

The input file (222star.input) consists of these two trees:

\begin{verbatim}
(a,(1,(2,((3,(4,b)),(5,(6,c))))));
(a,(1,(2,((3,(4,b)),(5,(6,c))))));
\end{verbatim}
The command is now
\begin{verbatim}
./proofExistsPrune.sh 222star.input 3
\end{verbatim}
This takes approximately 8 seconds to complete. In total 124 $(T,F')$ pairs are considered. See GitHub for the full execution trace.

\subsection{Computational proof of base-case for preserved 2-2-1 star theorem.}
\label{subsec:bc221}

The input file (221star.input) consists of these two trees:

\begin{verbatim}
(a,(1,(2,((3,(4,b)),(5,c)))));
(a,(1,(2,((3,(4,b)),(5,c)))));
\end{verbatim}
The command is now
\begin{verbatim}
./proofExistsPrune.sh 221star.input 3
\end{verbatim}
This takes approximately 9 seconds to complete. In total 124 $(T,F')$ pairs are considered. See GitHub for the full execution trace.

\subsection{Computational proof of base-case that common 4-chains are preserved in all MAFs}
\label{subsec:bc4all}

Here we use our code in ``for all'' mode. The input (4chain.input) is
\begin{verbatim}
(a,(1,(2,(3,(4,b)))));
(a,(1,(2,(3,(4,b)))));
\end{verbatim}
The command is now
\begin{verbatim}
./proofForallPrune.sh 4chain.input 2
\end{verbatim}
To highlight how ``for all'' mode is different we here provide the complete execution trace.

\begin{verbatim}
Reading trees from 4chain.input.
Will add up to 2 taxa.
REMOVECHERRIES = yes
LOGS = no
STARTTAXA = 0
FORALL = yes
ALLOWFORESTS = yes
ALLOWSINGLETONFORESTS = no
Note: in the execution below, FOREST1 indicates the situation F' = T'.
*****************************************************************
First tree: (a,(1,(2,(3,(4,b)))));
Second tree: (a,(1,(2,(3,(4,b)))));
*****************************************************************
Going to try expansions with 0 taxa, ... all the way up to 2 taxa.
*****************************************************************
********* OUTERLOOP: |X \setminus K| has exactly 0 taxa. ********
*****************************************************************
Found 1 trees before symmetry reduction:
(1,(2,(3,4)));
Found these unique patterns:
(1,(2,(3,4)));
1 patterns in total.
Selecting one tree per pattern.
(1,(2,(3,4)));   (1,(2,(3,4)));
1 canonical trees generated.
Putting these canonical trees (without their pattern) in file ./firstTreesPreCherry.
(1,(2,(3,4)));
=======================
Expansions first tree:
(1,(2,(3,4)));
Removing expansions of first tree that have a cherry in the first tree
(outside the preserved region).
0 of the trees had such a cherry.
Expansions of first tree after removing trees with cherries:
(1,(2,(3,4)));
=======================
Expansions second tree:
(1,(2,(3,4)));
=======================
Summary:
1 first trees
1 second trees
Total number of tree pairs:
1
=======================
======================
Combination 1 1 (i.e. tree pair 1)
(1,(2,(3,4)));
(1,(2,(3,4)));
Considering all the forests. (Singleton forests allowed = no)
FOREST1: x1 <= 1
OPT without modification: 0
OPT with modification: 1
1c1
< 0
---
> 1
*****************************************************************
********* OUTERLOOP: |X \setminus K| has exactly 1 taxa. ********
*****************************************************************
Found 2 trees before symmetry reduction:
(1,(2,(3,(4,X1))));
(X1,(1,(2,(3,4))));
Found these unique patterns:
(1,(2,(3,(4,Y))));
(Y,(1,(2,(3,4))));
2 patterns in total.
Selecting one tree per pattern.
(1,(2,(3,(4,X1))));   (1,(2,(3,(4,Y))));
(X1,(1,(2,(3,4))));   (Y,(1,(2,(3,4))));
2 canonical trees generated.
Putting these canonical trees (without their pattern) in file ./firstTreesPreCherry.
(1,(2,(3,(4,X1))));
(X1,(1,(2,(3,4))));
=======================
Expansions first tree:
(1,(2,(3,(4,X1))));
(X1,(1,(2,(3,4))));
Removing expansions of first tree that have a cherry in the first tree
(outside the preserved region).
0 of the trees had such a cherry.
Expansions of first tree after removing trees with cherries:
(1,(2,(3,(4,X1))));
(X1,(1,(2,(3,4))));
=======================
Expansions second tree:
(1,(2,(3,(4,X1))));
(X1,(1,(2,(3,4))));
=======================
Summary:
2 first trees
2 second trees
Total number of tree pairs:
4
=======================
======================
Combination 1 1 (i.e. tree pair 1)
(1,(2,(3,(4,X1))));
(1,(2,(3,(4,X1))));
Considering all the forests. (Singleton forests allowed = no)
FOREST1: x1 <= 1
OPT without modification: 0
OPT with modification: 1
1c1
< 0
---
> 1
======================
Combination 1 2 (i.e. tree pair 2)
(1,(2,(3,(4,X1))));
(X1,(1,(2,(3,4))));
Considering all the forests. (Singleton forests allowed = no)
FOREST1: x1 <= 1
OPT without modification: 1
OPT with modification: 2
1c1
< 1
---
> 2
======================
Combination 2 1 (i.e. tree pair 3)
(X1,(1,(2,(3,4))));
(1,(2,(3,(4,X1))));
Considering all the forests. (Singleton forests allowed = no)
FOREST1: x1 <= 1
OPT without modification: 1
OPT with modification: 2
1c1
< 1
---
> 2
======================
Combination 2 2 (i.e. tree pair 4)
(X1,(1,(2,(3,4))));
(X1,(1,(2,(3,4))));
Considering all the forests. (Singleton forests allowed = no)
FOREST1: x1 <= 1
OPT without modification: 0
OPT with modification: 1
1c1
< 0
---
> 1
*****************************************************************
********* OUTERLOOP: |X \setminus K| has exactly 2 taxa. ********
*****************************************************************
Found 4 trees before symmetry reduction:
(1,(2,(3,(4,(X1,X2)))));
(X1,(1,(2,(3,(4,X2)))));
(X2,(1,(2,(3,(4,X1)))));
((X1,X2),(1,(2,(3,4))));
Found these unique patterns:
((Y,Y),(1,(2,(3,4))));
(1,(2,(3,(4,(Y,Y)))));
(Y,(1,(2,(3,(4,Y)))));
3 patterns in total.
Selecting one tree per pattern.
(1,(2,(3,(4,(X1,X2)))));   (1,(2,(3,(4,(Y,Y)))));
(X1,(1,(2,(3,(4,X2)))));   (Y,(1,(2,(3,(4,Y)))));
((X1,X2),(1,(2,(3,4))));   ((Y,Y),(1,(2,(3,4))));
3 canonical trees generated.
Putting these canonical trees (without their pattern) in file ./firstTreesPreCherry.
(1,(2,(3,(4,(X1,X2)))));
(X1,(1,(2,(3,(4,X2)))));
((X1,X2),(1,(2,(3,4))));
=======================
Expansions first tree:
(1,(2,(3,(4,(X1,X2)))));
(X1,(1,(2,(3,(4,X2)))));
((X1,X2),(1,(2,(3,4))));
Removing expansions of first tree that have a cherry in the first tree
(outside the preserved region).
2 of the trees had such a cherry.
Expansions of first tree after removing trees with cherries:
(X1,(1,(2,(3,(4,X2)))));
=======================
Expansions second tree:
(1,(2,(3,(4,(X1,X2)))));
(X1,(1,(2,(3,(4,X2)))));
(X2,(1,(2,(3,(4,X1)))));
((X1,X2),(1,(2,(3,4))));
=======================
Summary:
1 first trees
4 second trees
Total number of tree pairs:
4
=======================
======================
Combination 1 1 (i.e. tree pair 1)
(X1,(1,(2,(3,(4,X2)))));
(1,(2,(3,(4,(X1,X2)))));
Considering all the forests. (Singleton forests allowed = no)
FOREST1: x1 <= 1
OPT without modification: 1
OPT with modification: 2
1c1
< 1
---
> 2
FOREST2: x7 = 1
OPT without modification: 2
OPT with modification: 3
1c1
< 2
---
> 3
======================
Combination 1 2 (i.e. tree pair 2)
(X1,(1,(2,(3,(4,X2)))));
(X1,(1,(2,(3,(4,X2)))));
Considering all the forests. (Singleton forests allowed = no)
FOREST1: x1 <= 1
OPT without modification: 0
OPT with modification: 1
1c1
< 0
---
> 1
======================
Combination 1 3 (i.e. tree pair 3)
(X1,(1,(2,(3,(4,X2)))));
(X2,(1,(2,(3,(4,X1)))));
Considering all the forests. (Singleton forests allowed = no)
FOREST1: x1 <= 1
OPT without modification: 2
OPT with modification: 3
1c1
< 2
---
> 3
======================
Combination 1 4 (i.e. tree pair 4)
(X1,(1,(2,(3,(4,X2)))));
((X1,X2),(1,(2,(3,4))));
Considering all the forests. (Singleton forests allowed = no)
FOREST1: x1 <= 1
OPT without modification: 1
OPT with modification: 2
1c1
< 1
---
> 2
FOREST2: x3 = 1
OPT without modification: 2
OPT with modification: 3
1c1
< 2
---
> 3
----------------------------------------------
----------------------------------------------
No counter-example found: BASE CASE TRUE!
----------------------------------------------
----------------------------------------------
\end{verbatim}

\subsection{Blocking gadgets}

\begin{lem}
\label{lem:block}
Let $K = \{a,b,c,d\}$ and $L=\{e,f,g,h\}$ be length-4 common chains which are arranged next to each other in the order $a,b,c,d,e,f,g,h$ in $T$ and $d,c,b,a,h,g,f,e$ in $T'$. Then \emph{every} MAF of $T,T'$ contains a component $K$ (with no other taxa in it) and a component $L$ (with no other taxa in it).
\end{lem}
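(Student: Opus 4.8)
The plan is to combine two ingredients already available in the paper: (a) Lemma~\ref{lem:4everywhere}, which says that a common $4$-chain is preserved by \emph{every} MAF, and (b) the observation that in the gadget configuration the chains $K=(a,b,c,d)$ and $L=(e,f,g,h)$ appear in \emph{reversed} internal order in $T$ versus $T'$, so that no component of an agreement forest can contain taxa from both $K$ and $L$ while containing all of either one. First I would check that $K$ is genuinely a common $4$-chain of $T$ and $T'$: in $T$ it is the chain $(a,b,c,d)$, and in $T'$ it is the chain $(d,c,b,a)$, which is the same chain read backwards and hence still a common chain in the sense of the Preliminaries (a chain is a sequence up to reversal). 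The same applies to $L$. So Lemma~\ref{lem:4everywhere} applies to each of $K$ and $L$ separately, giving that every MAF $F$ of $(T,T')$ has some component $U_K \supseteq K$ and some component $U_L \supseteq L$.

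Next I would argue $U_K = K$ exactly (and symmetrically $U_L = L$). Suppose for contradiction $U_K$ contains some taxon $x \notin K$. Since $U_L \supseteq L$ and the components of an agreement forest partition $X$, either $U_K = U_L$, or $x \notin L$. In the first case $U_K \supseteq K \cup L$, but then $T|U_K$ and $T'|U_K$ disagree: in $T$ the cyclic/linear order along the chain region places $a,b,c,d$ then $e,f,g,h$, whereas in $T'$ it places $d,c,b,a$ then $h,g,f,e$; restricting to $\{a,b,c,d,e,f,g,h\}$ these two restrictions are non-isomorphic as $X$-labelled trees (e.g.\ the split $\{a,d\}\mid\{b,c\}$ induced within the caterpillar differs), contradicting condition~(1) of an agreement forest. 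In the second case $x$ lies outside both gadgets; but then, because $K$ sits \emph{between} the rest of $T_\alpha$ and the rest of $T_\beta$ in the gadget construction — more precisely, because in $T'$ the chain $K$ appears reversed, so the embedding $T'[K\cup\{x\}]$ must route through the parents of $a,b,c,d$ in an order incompatible with $T[K\cup\{x\}]$ — we again get $T|(K\cup\{x\}) \neq T'|(K\cup\{x\})$. I would make this last point rigorous by the same quartet/ordering argument used repeatedly in Section~\ref{subsec:int4} and in the proof of Theorem~\ref{thm:221}: adjoining any external taxon to a reversed common chain of length $\ge 3$ breaks the common-restriction condition. Hence $U_K = K$ and $U_L = L$.

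The main obstacle I anticipate is the bookkeeping in the second case above: one has to verify that \emph{wherever} the extra taxon $x$ attaches in $T_\alpha$ and in $T_\beta$, the reversed orientation of the chain forces a topological disagreement. The cleanest route is probably not to argue geometrically case-by-case but to invoke Lemma~\ref{lem:4everywhere} in a sharper form: since every MAF preserves the $4$-chain $K$, the component containing $K$ restricted to $K$ is the caterpillar with chain order $(a,b,c,d)$ as seen in $T$; but the same component restricted to $K$ as seen in $T'$ is the caterpillar with chain order $(d,c,b,a)$; these coincide as $X$-trees. Adding any fifth taxon $x$, the restriction to $K\cup\{x\}$ in $T$ is a caterpillar in which $x$ hangs off one of the five ``slots'' of the $(a,b,c,d)$-path, while in $T'$ it hangs off one of the five slots of the $(d,c,b,a)$-path at the position determined by where $x$ sits relative to the gadget; matching these up forces $x$ to occupy mirror-image slots, which is impossible for a genuine $4$-chain (the endpoints of the chain are distinguishable once $|K|\ge 3$). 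I would spell this out as a short self-contained sublemma, or simply cite the identical reasoning already deployed for the ``reversed order'' contradictions earlier in the paper. Given that sublemma, the conclusion $U_K = K$, $U_L = L$ for every MAF is immediate, completing the proof.
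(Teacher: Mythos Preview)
Your overall strategy matches the paper's: invoke Lemma~\ref{lem:4everywhere} to get components $U_K\supseteq K$ and $U_L\supseteq L$, then rule out any extra taxon in $U_K$ (and symmetrically in $U_L$). The case $U_K=U_L$ is handled correctly in spirit, although your illustrative split $\{a,d\}\mid\{b,c\}$ is not a split of either caterpillar; a valid witness is that $\{a,b\}\mid\{c,d,e,f,g,h\}$ is a split of $T|(K\cup L)$ but not of $T'|(K\cup L)$.

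The genuine gap is in the case $x\notin K\cup L$. Your argument there rests on the claim that ``adjoining any external taxon to a reversed common chain of length $\ge 3$ breaks the common-restriction condition'', and your proposed sublemma about ``mirror-image slots'' formalises the same idea. This claim is false. The chain $(a,b,c,d)$ and the chain $(d,c,b,a)$ are the \emph{same} unrooted chain, so reversal by itself imposes no constraint. Concretely, if $x$ lies beyond $h$ in $T$ and beyond $d$ (i.e.\ on the $d$-side, away from $L$) in $T'$, then $T|(K\cup\{x\})$ and $T'|(K\cup\{x\})$ are both the caterpillar with $x$ attached at the $d$-end, and they agree. So condition~(1) alone does not exclude such an $x$.

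What actually forces the contradiction, and what the paper's proof uses, is condition~(2): vertex-disjointness of embeddings. Since $U_L\supseteq L$, the embedding $T[U_L]$ occupies the entire $e,f,g,h$ stretch in $T$; hence $T[U_K]$ cannot cross it, and any $x\in U_K\setminus K$ must lie on the $a$-side of $K$ in $T$. Symmetrically, in $T'$ the chain $L$ sits immediately adjacent to the $a$-end of $K$, so $T'[U_L]$ blocks that end and $x$ must lie on the $d$-side of $K$ in $T'$. Now condition~(1) gives the contradiction: $x$ attaches at the $a$-end in $T|U_K$ but at the $d$-end in $T'|U_K$. Your write-up never invokes the disjointness of $T[U_K]$ and $T[U_L]$, and without it the argument does not close.
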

\begin{proof}
Let $F^{*}$ be an arbitrary MAF for $T, T$'. From Lemma \ref{lem:4everywhere}, $K$ is preserved by $F^{*}$ and so is $L$. They cannot be in the same component of $F^{*}$, because $T|(K \cup L)$ has a different topology to $T'|(K \cup L)$: this is the reason why in $T'$ the relative orientations of $K$ and $L$ are opposite to $T$. Now, suppose the component that contains $K$, contains some taxon $x \not \in K \cup L$. In $T$, $x$ must meet $K$ at its $a$ end, but that is not possible because in $T'$ the $a$ end of $K$ is blocked by the presence of $L$. A symmetrical argument holds for the $L$ component.
\end{proof}

\end{document}